\documentclass[a4paper,UKenglish,cleveref, autoref, thm-restate]{lipics-v2021}


\hideLIPIcs  


\bibliographystyle{plain}

\title{$\epsilon$-Distance via Lévy-Prokhorov Lifting}


\author{Josée Desharnais}{Laval University, Canada} 
{josee.desharnais@ift.ulaval.ca}{https://orcid.org/0000-0003-2410-3314}{Work funded by NSERC grant}

\author{Ana Sokolova}{University of Salzburg, Austria}{ana.sokolova@cs.uni-salzburg.at}{https://orcid.org/0000-0002-8384-3438}{}

\authorrunning{J. Desharnais and A. Sokolova} 

\Copyright{Jos\'ee Desharnais and Ana Sokolova} 


\begin{CCSXML}
<ccs2012>
   <concept>
       <concept_id>10003752.10003753.10003757</concept_id>
       <concept_desc>Theory of computation~Probabilistic computation</concept_desc>
       <concept_significance>500</concept_significance>
       </concept>
   <concept>
       <concept_id>10003752.10010124.10010131.10010137</concept_id>
       <concept_desc>Theory of computation~Categorical semantics</concept_desc>
       <concept_significance>500</concept_significance>
       </concept>
   <concept>
       <concept_id>10003752.10010124.10010138.10010142</concept_id>
       <concept_desc>Theory of computation~Program verification</concept_desc>
       <concept_significance>500</concept_significance>
       </concept>
 </ccs2012>
\end{CCSXML}

\ccsdesc[500]{Theory of computation~Probabilistic computation}
\ccsdesc[500]{Theory of computation~Categorical semantics}
\ccsdesc[500]{Theory of computation~Program verification}

\keywords{Lévy-Prokhorov metric,
behavioural distance,
epsilon-bisimulation,
reactive probabilistic transition systems,
discrete labelled Markov processes,
coalgebraic epsilon-(bi)simulation} 






\nolinenumbers 

\EventEditors{John Q. Open and Joan R. Access}
\EventNoEds{2}
\EventLongTitle{42nd Conference on Very Important Topics (CVIT 2016)}
\EventShortTitle{CVIT 2016}
\EventAcronym{CVIT}
\EventYear{2016}
\EventDate{December 24--27, 2016}
\EventLocation{Little Whinging, United Kingdom}
\EventLogo{}
\SeriesVolume{42}
\ArticleNo{1}

\usepackage{graphicx,xcolor,wrapfig}
\usepackage{amssymb,amsmath}
\usepackage{amsmath,latexsym}
\RequirePackage{stmaryrd}

 \usepackage[T1]{fontenc}
\usepackage[utf8]{inputenc}

\definecolor{darkross}{rgb}{0.016,0.404,0.667}   
\definecolor{middleross}{rgb}{0.020,0.522,0.859}
\definecolor{lightross}{rgb}{0.094,0.624,0.980}
\definecolor{blue}{named}{middleross}


\usepackage[all]{xy}
\CompileMatrices

\def\ang#1#2{\langle#1\rangle_{#2}\, }

\def\cA{\mathcal{A}}
\def\cD{\mathcal{D}}

\def\cR{R}

\def\Sets{\mathbf{Sets}}
\def\PMet1{\mathbf{PMet_1}}

\renewcommand{\epsilon}{\varepsilon}

\begin{document}
\maketitle


\begin{abstract}
The most studied and accepted pseudometric for probabilistic processes is one based on the Kantorovich distance between distributions. It comes with many theoretical and motivating results, in particular it is the fixpoint of a given functional and defines a functor on (complete) pseudometric spaces. It is also the foundation for a categorical lifting of pseudometrics.

Other notions of behavioural pseudometrics have also been proposed, one of them (\texorpdfstring{$\epsilon$}{epsilon}-distance) based on  \texorpdfstring{$\epsilon$}{epsilon}-bisimulation. \texorpdfstring{$\epsilon$}{epsilon}-Distance has the advantages that it is intuitively easy to understand, it 
relates systems that are conceptually close (for example, an imperfect implementation is close to its specification), and it comes equipped with a natural notion of \texorpdfstring{$\epsilon$}{epsilon}-coupling. Finally, this distance
is easy to compute.

We show that \texorpdfstring{$\epsilon$}{epsilon}-distance is also the greatest fixpoint of a functional and provides a functor. The latter is obtained by replacing the  Kantorovich distance in the lifting functor with the Lévy-Prokhorov distance. In addition, we show that \texorpdfstring{$\epsilon$}{epsilon}-couplings and $\epsilon$-bisimulations have an appealing coalgebraic characterization. 
\end{abstract}

\section{Introduction} \label{sec:intro}

Probabilistic systems~\cite{Giacalone89,Larsen91,SL94:concur,Bai98,Baier2008,Panangaden09} 
and their behaviour have been the object of study of over thirty years in the area of formal verification and analysis of systems. They are used to represent uncertainty, incomplete information, as well as randomized behaviour. One important direction, in order to prove that systems behave the same, is the study of behavioural equivalences: these identify states with (exactly) the same behaviour. Behavioural equivalences are an elegant way to analyse and compare behaviour of systems, they have nice foundations in concurrency theory~\cite{Mil89a,GlabbeekSS95} as well as elegant abstract generalizations in the theory of coalgebras~\cite{deVink99,Sokolova05,Hasuo06,Jacobs0S15,Hasuo10,Jacobs2016-book,BSS17,BonchiSV22}.
However, as was already observed in ~\cite{Giacalone90}, in a non-exact world, e.g., when the probabilities in the model are approximate and not exactly known, or estimated by sampling, or when small differences should not be considered the same as large differences, 
behavioural equivalences may be too strong. 

One solution is to employ a distance, actually a pseudometric, providing a quantitative notion of how much states in a probabilistic model differ from one another, or how close they are to each other--distance zero corresponding to equivalence. The study of pseudometrics~\cite{Desharnais99b,Worrell01,Worrell01b,Desharnais04,AlfaroFS04,breugel05,breugel07,DesLavTra08,Worrell01,Worrell01b,TraQapl11,10.1145/3157831.3157837} and quantitative theories~\cite{MardarePP16,MardarePP17,Bacci0LM17,BacciBLM17,BaldanBKK18,MioSV21,DAngeloGK0NRW24,MioSV21,MioSV24} has been a fruitful one in the past decade(s), the references above showing only some of the original sources.
 
Historically, the impulse to define a behavioral distance between probabilistic processes became imperative for those with a \emph{continuous state space and continuous probability distributions}, called Labelled Markov Processes (LMPs). It started with the idea of defining a real-valued logic, as proposed by Kozen~\cite{Kozen81}, to express properties satisfied by processes: Taking the supremum over the differences of the values of formulas was a natural way of defining a pseudometric, see Desharnais et al.~\cite{Desharnais99b,Desharnais04}. The authors later on observed that this distance was the greatest fixpoint of some functional~\cite{Desharnais02b} and, in between and later, van Breugel and Worrell~\cite{Worrell01,Worrell01b,breugel05} and van Breugel et al.~\cite{breugel07}, proved that this gives a functor (monad) that is a lifting of the distribution functor (monad) using the Kantorovich metric on distributions. To our knowledge, no other such metric has been lifted to transition systems in a categorical way. Moreover, the distance is the one obtained using final coalgebra semantics. Since then, this pseudometric has been studied further in different contexts and algorithms have been proposed to compute it~\cite{TangB16, TangB17,Bacci0LM17,BacciBLM17}. 
   The distance comes with many theoretical and motivating results, and has also been extended by Baldan et al.~\cite{BaldanBKK18} to a generic Kantorovich-style (and dually Wasserstein) lifting of arbitrary functors parametric in certain evaluation maps---the two liftings coincide on distributions. 
With all these seals of approval, the Kantorovich behavioural distance has imposed itself as the one to use and study. One aspect that it lacks  is an intuitive accompanying notion of approximate bisimulation.

Approximate bisimulations provide another strategy to circumvent the non-robustness of bisimulation equivalences.  They have been studied in non-probabilistic systems~\cite{ Ying00,Pappas07}, and in probabilistic   systems~\cite{DesLavTra08,TraQapl11,DInnocenzoAK12, ABATE2013, AbateKNP14, BianA17, SBKPQ2024}. 
We focus on the notion of $\epsilon$-bisimulation defined in~\cite{DesLavTra08}.
One aim of such approximate bisimulations is to unite the best of both worlds:  give a relational structure for reasoning about systems and at the same time define a distance. States related by an $\epsilon$-bisimulation are at distance at most $\epsilon$ from one another.
The advantages of such a distance  are its
intuitive nature and its ability to 
relate systems that  are almost the same in structure, as the next example will illustrate.  
Finally,  the distance can be computed relatively easily.

For comparison with the Kantorovich behavioural distance, consider the imperfect channel (when $\epsilon > 0$) depicted in Figure~\ref{fig:channel}.
The $\epsilon$-distance of this channel to a perfect channel 
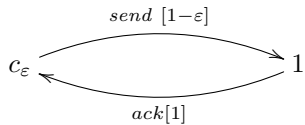
\begin{wrapfigure}{l}{6cm} 
      $$\xymatrix@C=15mm@R=10mm{
     c_\epsilon \ar@/^1pc/[rr]^{\textit{send} ~[1-\epsilon]} 
 && 1 \ar@/^1pc/[ll]^{ack[1]}}$$
    \caption{A simple channel that fails to send a token with probability $\epsilon$}
    \label{fig:channel}
    \end{wrapfigure}
(one with $\epsilon = 0$) is $\epsilon$, and it provides a simple example that  imperfect implementation can be considered close to its specification. On the other hand, the Kantorovich distance (without discount) gives distance 1 to the pair of perfect and imperfect channels, which  reflects the fact that the two channels have a very different behavior on the long run--see Example~\ref{ex:channel-K}  for more details.
Interestingly, 
$\epsilon$-distance has an almost build-in notion of continuity: Two  channels with close values of $\epsilon$ are close in the $\epsilon$-distance as well. 

However, the approximate equivalences had no categorical formulation attached and it was not clear whether the associated distance shares some  of the generic good properties identified for the Kantorovich metric.  
Our main motivation for this work was to investigate whether we can remedy this situation. Indeed, we provide the missing abstract characterization of $\epsilon$-bisimulation in two directions: by obtaining a fixpoint characterization of the induced distance and by giving a coalgebraic view following Aczel-Mendler coalgebraic bisimilarity. The fixpoint characterization is the main result of the paper. For this, we show how to replace the Kantorovich pseudometric on distributions with the \emph{L\'evy-Prokhorov pseudometric} on distributions, and show that the distance obtained is indeed the $\epsilon$-distance. It is remarkable how well the L\'evy-Prokhorov distance fits into the definition of $\epsilon$-bisimulation: We show that $\epsilon$-distance is the greatest fixpoint of a suitable functional, and that the lifting also lifts the discrete distributions functor to a functor on pseudo-metric spaces. We also show a result that has no matching for the Kantorovich functional: any fixpoint distance of the L\'evy-Prokhorov functional 
defines an $\epsilon$-bisimulation.
Applications of our results may take different routes: 
On the practical side, the distance could be computed by an iterative algorithm given a suitable fixpoint theorem.
On the theoretical side, the fixpoint characterization may open the way for studying new problems, one example being $\epsilon$-bisimulations up-to as an acceleration to computing $\epsilon$-bisimilarity, along the lines of~\cite{BonchiKP23}.   

Our observation may also open a new way of studying other pseudometrics on probabilistic systems that would be constructed from other basic distances on distributions.  More generally, the construction could be applied to a  pseudometric on $FX$ for a functor $F$ on $\Sets$. It is yet to be seen whether the L\'evy-Prokhorov lifting fits  the Kantorovich / codensity - style liftings of Baldan et al.~\cite{BaldanBKK18} which aim at generalizing the Kantorovich lifting to generic functors. If not, perhaps the lifting can provide another way of lifting functors to metric spaces. 


Unlike in the case of the Kantorovich lifting, the L\'evy Prokhorov lifting does not yield a monad on the category of pseudo-metric spaces, see Section~\ref{sec:axiom}. As a consequence, we could not follow the path of~\cite{breugel07} and reproduce their results on the final coalgebra. Instead, we take a different way and describe $\epsilon$-(bi)simulations as coalgebraic (bi)simulations, following and extending the framework of Hasuo~\cite{Hasuo06,Hasuo10} and the original results of Hughes and Jacobs~\cite{HughesJacobs04}.
We only present the necessary abstraction needed to discuss $\epsilon$-(bi)simulations on discrete time Markov chains and labelled Markov processes, generalizations of these notions are possible and we plan to elaborate on them in follow-up work.

\section{The objects of interest}			

Our development concerns discrete labelled Markov processes (LMPs) also called reactive probabilistic transition systems. They consist of a set of states $S$ and transitions $\tau_a: S\to \cD S$ labelled with actions $a$ of some set $\cA$, where $\cD S$ is the set of discrete sub-distributions on $S$:
$$\cD S =\{ \varphi \colon S \to [0,1] \mid \sum_{s \in S }\varphi(s) \le 1 \}.$$
Note that the size of the state set need not be restricted (for the sums to be defined), as the sum is defined as $\sum_{x \in S} \varphi(x) = \sup_{{\textrm{finite }}X \subseteq S} \left(\sum_{x \in X} \varphi(x)\right) $. It is easy to prove that discrete probability distributions always have a countable support, i.e., they assign non-zero probability to at most countably many elements of $S$. 

For  $s,t\in S$ and $a\in \cA$, the value $\tau_a(s)(t)$ (also written $\tau_a(s,t)$) encodes the probability of jumping from $s$ to $t$ when action $a$ is taken.  Figure~\ref{fig:channel} shows such a system with actions \emph{send} and \emph{ack}. The first transition is a subprobability, where the missing probability represents the probability of failure.
If $X\subseteq S$, we write $\tau_a(s)(X)$ for $\sum_{t\in X} \tau_a(s)(t)$.
Coalgebraically, LMPs, for a set of actions or labels $\cA$, are arrows $\tau\colon S \to (\cD S)^\cA$. We write $\tau_a(s)$ for the subdistribution $\tau(s)(a)$. 
In case of just one label, LMPs are (discrete time) Markov chains, i.e., coalgebras of the functor $\cD$. The reason for working with subdistributions is that they come equipped with a non-trivial pointwise order. 
Bisimilarity of LMPs is the classical Larsen-Skou~\cite{Larsen91} bisimilarity. 

We denote by $\PMet1$ the category of $1$-bounded pseudometric spaces. It has as objects pseudometric spaces, which are pairs $\langle X,d \rangle$ of a set $X$ and a function $d \colon X\times X \to [0,1]$ which satisfies  $d(x,x) = 0$, is symmetric: $d(x,y)= d(y,x)$ and satisfies the triangular inequality: $d(x,y) + d(y,z) \ge d(x,z)$ for all $x,y,z \in X$. As arrows, $\PMet1$ has non-expansive maps $f\colon \langle X,d_X\rangle \to \langle Y,d_Y\rangle$, i.e.,  functions $f \colon X \to Y$ satisfying $d_Y(f(x_1),f(x_2)) \le d_X(x_1,x_2)$. A metric or distance is a pseudometric that additionally satisfies $d(x,y) = 0$ iff $x=y$. Of particular interest for LMPs are pseudometrics whose kernel is bisimilarity, that is, $d(x,y) = 0$ iff $x$ and $y$ are bisimilar.  As we only deal with pseudometrics here, we often use the word distance or metric as a shorthand for pseudometric.
Following~\cite{breugel05,breugel07}, we will  use the opposite pointwise  order\footnote{The only reason for this choice is to obtain the distance as the greatest, rather than least, fixpoint -- just like bisimilarity is the greatest fixpoint of a suitable functional.} on distances defined by: $$d_1 \sqsubseteq d_2 \quad \Longleftrightarrow \quad \forall x,y \in X : d_1(x,y) \ge d_2(x,y).$$
However, whenever it is clearer, we may use the direct pointwise order too:  $ d_1\geq d_2 \Leftrightarrow d_1 \sqsubseteq d_2 $.

 \subsection{The Kantorovich pseudometric} 
The observation that bisimulation is too strong in the context of probabilistic systems has led to the idea of defining a pseudometric that would give zero distance to bisimilar states. Initially, this pseudometric was defined using a real-valued logic. A set $\cal F$ of functionals were defined from states of LMPs to $[0,1]$, with the following syntax, mimicking logical formulas~\cite{Desharnais04}:
$$
f:= 1\mid  \inf (f_1, f_2) \mid 1-f \mid  f\varominus q\mid \ang{a}{}f \mbox{ with } q\in[0,1]\cap\mathbb{Q}.
$$
We omit the semantics, but the next example will  give a taste of it.
A distance emerged naturally as
$$d_{K}(s,t)=\sup_{f\in \cal F} \big|f(s)-f(t)\big|.
$$
\begin{example}\label{ex:channel-K}
    As an example, we can look back at the channel of Figure~\ref{fig:channel}. The maximum difference over functionals between $c_\epsilon$ and $c_\gamma$, with $\epsilon,\gamma>0$ is over the  functional $\langle \textit{send\,} \rangle\langle \textit{ack}\, \rangle 1 $, which evaluates to $1-\epsilon$ and $1-\gamma$, respectively,  and thus yields the distance  $d_K(c_\epsilon,c_\gamma) = |\epsilon-\gamma|$. 
    Between $c_\epsilon$ and $c_0$, the functional $(\langle send\rangle\langle ack \rangle)^n 1 $ evaluates to $(1-\epsilon)^n$ on $c_\epsilon$, and to 1 on $c_0$. So the supremum results in $d_K(c_\epsilon,c_0)=1$.
\end{example}
Later on, it was proven that the functionals could be any non-expansive maps and that this distance is the distance obtained via a final coalgebra construction for a functor (monad) that is a lifting of the Distribution functor (monad) to pseudometric spaces using the Kantorovich metric, by van Breugel and Worrell~\cite{breugel05} and van Breugel et al.~\cite{breugel07,breugelSW07}.

This distance is
a fixpoint of the functional on distances on the states of an LMP  $\tau:S\to(\cD S)^\cA$
\begin{equation}
    \Delta_{K}(d)(s_{0},s_{1})= \sup_{a \in \cA}\,\delta_K^d(\tau_{a}(s_{0}),\tau_{a}(s_{1})) 
    \label{eq:DeltaK}
\end{equation}
where $\delta_K^d$ is the standard well-known Kantorovich distance between two distributions. We do not need its definition here, the interested reader can find it in, e.g., \cite{breugel07}.

The distance on distributions is just a parameter in Equation~\eqref{eq:DeltaK}; changing the Kantorovich distance $\delta_K^d$ to another distance on distributions and looking for a fixpoint will give us a new distance on states. 
In fact we are interested in the question formulated the other way around: find a distance on distributions for which the fixpoint is the $\epsilon$-distance $d^*$, defined in the next section. 
The functional $\Delta_K$ was  discovered  after the behavioural distance $d_K$ was introduced as well. However, it could have been done the other way around, choosing a distance on distributions, and looking for the resulting distance on states, similarly as how it is done by Baldan et al.~\cite{BaldanBKK18}. We discuss this in more detail again in Section~\ref{sec:LP} below.

\subsection{$\epsilon$-Bisimulation}    
Bisimulation being too strong for the comparison of states in quantitative systems, has also led to relaxing the definition of simulation and bisimulation itself to approximate relations. As mentioned in the introduction, there are a few of those definitions of approximate bisimulations, but we are interested in the following.

\begin{definition}[\cite{DesLavTra08}]
\label{d:eSim}
Let  $\tau:S\to(\cD S)^\cA$ be an LMP and let $\epsilon\in[0,1]$. A relation $R\subseteq S\times S$ is an \emph{$\epsilon$-simulation} if whenever  $s R t$, 
then for all $a\in \cA$, for all $ X\subseteq S$
$$  \tau_a(s)(X)\leq \tau_a(t)({R}(X))+\epsilon, \qquad\mbox{ where } R(X) = \{y\mid \exists x\in X : x R y\}.
$$
If $R$ is symmetric, it is an \emph{$\epsilon$-bisimulation}. \\ A state
 $s$ is $\epsilon$-simulated by state $t$, written \emph{$s\prec_\epsilon t$}, if $s R t$ for some  $\epsilon$-simulation $R$. \\ If $s R t$ for $R$ $\epsilon$-bisimulation, we write $s\sim_\epsilon t$ and we say that $s$ and $t$ are $\epsilon$-bisimilar.
\end{definition}
As expected, ordinary (bi)simulation on LMPs~\cite{Larsen91,Blute97} is simply $0$-(bi)simulation. This definition has an extension to nondeterministic and probabilistic finite systems~\cite{TraQapl11}.

The operation $\cR(X)$ on a set $X$ is what restricts this work to discrete distributions. In general, if $X$ is measurable, we may not have $R(X)$ measurable. Working in analytic spaces, as was done before for LMPs~\cite{Desharnais00} may solve this issue, but we leave it for future work. In particular, one could ask  $R(X)$ to be measurable whenever $X$ is, as done in~\cite{BianA17}, but many results are not proven in that case, like the logical characterisation, or Theorem~\ref{t-kernel d} below.

\begin{example} 
\label{ex:e-bisim}
Consider the following example, with $\gamma \in (0,1]$. 
$$\xymatrix{
s\ar[d]_{a,\gamma}\ar[r]_{a,1-\gamma} & s_1\ar@(dr,ur)[]_{a,1-\gamma} 
&&  t\ar[r]_{a,1} & t_1\ar@(dr,ur)[]_{a,1} \\s_2
}
$$
The relation $R=\{(s,t),(s_1,t_1)\}$ is an $\epsilon$-simulation for $\varepsilon = \gamma$, and so is $R\cup R^{-1}$. Hence, $s\prec_{{\gamma}} t$, $t\prec_{{\gamma}} s$
 and $s\sim_{{\gamma}} t$.
 However, the situation is different for  $\epsilon<\gamma$: for example, taking $\epsilon = 0$, we observe that $s\prec_0 t$ (with the relation $R \cup \{ (s_2,t_1)\}$).
However, $t\not\prec_0 s$, because a relation $R'$ relating $t$ and $s$ would have to also relate $t_1$ to $s_1$ and to $s_2$, if $\gamma <1$, and $t_1$ to $s_2$ if $\gamma = 1$.  
Indeed, taking $X = \{t_1\}$, we need $\tau_a(t)(X)\leq \tau_a(s)(R'(X)) +0$, but for this to be the case, we would need $\tau_a(s)(R'(X)) = 1$ and hence $R'$ must include the pairs $(t_1,s_1)$ and $(t_1, s_2)$ if $\gamma < 1$, and otherwise must include $(t_1,s_2)$ if $\gamma = 1$. So, in any case $(t_1, s_2) \in R'$, but the pair $(t_1, s_2)$ cannot be related by any $\epsilon$-simulation relation $R'$ for $\epsilon <1$ (which is the case here as $\epsilon = 0$), as,  for example, taking again $X = \{t_1\}$, we would have $\tau_a(t_1)(X) = 1$ but $\tau_a(s_2)(R'(X)) = 0$.
Hence, also $s\not\sim_0 t$. 

\end{example}
In contrast to the case of 0-bisimulation, two-way $\epsilon$-simulation is not $\epsilon$-bisimulation~\cite{DesLavTra08,TraQapl11}. 
From the notion of $\epsilon$-bisimilarity, a pseudometric arises naturally by taking the infimum.

 \begin{definition}[The $\varepsilon$-distance $d^*$~\cite{DesLavTra08}]
Let $\tau:S\to(\cD S)^\cA$ be an LMP. The pseudometric $d^*$ on $S\times S$, called $\varepsilon$-distance, is defined as follows:
$$\begin{array}{lcl}
d^*:&S\times S&\rightarrow[0,1]\\
   &(s,t)&\mapsto \inf\ \left\{ \epsilon\in[0,1]\mid s\sim_\epsilon t\right\}.
\end{array}$$
\end{definition}
The function $d^*$ is a pseudometric on the states of the considered LMP: the triangle inequality comes from the well-known fact, see e.g.~\cite{DesLavTra08}, that $s\sim_{\epsilon_1} u$ and $u\sim_{\epsilon_2}t$ imply $s\sim_{\epsilon_1+\epsilon_2}t$. (The same property holds for $\prec_\epsilon$ as well.)
Indeed, $\prec_\epsilon$ and $\sim_\epsilon$ are not transitive. Instead, these  relations are \emph{entourages} that form a \emph{uniform structure}, or \emph{uniformity}, cf.~\cite{Bourbaki71}.

The kernel of $d^*$, that is, the set of pairs at zero distance, is $\sim_0$, the bisimilarity relation on LMPs.

\begin{theorem}[\cite{DesLavTra08}]\label{t-kernel d}
$d^*(s,t)=0 \Leftrightarrow s\sim_{0}t$.
\end{theorem}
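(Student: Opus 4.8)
My plan is to treat the two implications separately. The backward direction is immediate: if $s\sim_0 t$ then $0$ belongs to $\{\epsilon : s\sim_\epsilon t\}$, so its infimum $d^*(s,t)$ is $0$. All the work is in the forward direction, and I would organise it around the relation $R^* = \{(u,v) : d^*(u,v)=0\}$, aiming to show that $R^*$ is itself a $0$-bisimulation; since it contains $(s,t)$ whenever $d^*(s,t)=0$, this would give $s\sim_0 t$.

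First I would record two easy facts. $\epsilon$-bisimilarity is monotone in $\epsilon$ (an $\epsilon$-bisimulation is an $\epsilon'$-bisimulation for $\epsilon'\ge\epsilon$), and the union of any family of $\epsilon$-bisimulations is again one (because $R\subseteq R'$ gives $R(X)\subseteq R'(X)$), so $\sim_\epsilon$ is the largest $\epsilon$-bisimulation. Together with the definition of $d^*$ as an infimum, these give $d^*(u,v)=0 \Rightarrow u\sim_\epsilon v$ for every $\epsilon>0$, whence $R^* = \bigcap_{\epsilon>0}\sim_\epsilon$, a symmetric (in fact, by the triangle inequality, an equivalence) relation.

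Next, for $(u,v)\in R^*$, a label $a$ and each $n$, the largeness of $\sim_{1/n}$ yields the marginal inequality $\tau_a(u)(X)\le \tau_a(v)(\sim_{1/n}(X))+1/n$ for all $X$. I would read this as a Strassen/Hall condition and invoke Strassen's theorem to produce a coupling $\omega_n$ of $\tau_a(u)$ and $\tau_a(v)$ that places mass at most $1/n$ outside $\sim_{1/n}$; on the countable supports involved this is just a max-flow/marriage argument, after completing the subdistributions to distributions. The $\omega_n$ live on one fixed countable product set with fixed finite marginals, so they are tight and I can extract $\omega_{n_k}\to\omega$. Since $\{d^*\ge\eta\}$ is disjoint from $\sim_{1/n_k}$ once $1/n_k<\eta$, one has $\omega_{n_k}(\{d^*\ge\eta\})\le 1/n_k$, and Fatou on the countable support forces $\omega(\{d^*\ge\eta\})=0$ for every $\eta>0$; thus $\omega$ concentrates on $\{d^*=0\}=R^*$. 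As the marginals pass to the limit, $\omega$ is a genuine coupling of $\tau_a(u),\tau_a(v)$ supported on $R^*$, and then $\tau_a(u)(X)=\omega(X\times S)\le\omega(S\times R^*(X))=\tau_a(v)(R^*(X))$, which is exactly the $0$-bisimulation condition.

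The main obstacle is precisely this passage to the limit. The tempting shortcut --- combining $\tau_a(u)(X)\le\tau_a(v)(\sim_{1/n}(X))+1/n$ with continuity from above of the discrete measure $\tau_a(v)$ --- fails, because $\bigcap_n\sim_{1/n}(X)$ can be strictly larger than $R^*(X)$: a state $y$ may be a $d^*$-limit of points of $X$, each witnessed by a \emph{different} approximant, without lying at distance $0$ from any single point of $X$, and that surplus mass cannot be discarded pointwise. The coupling-plus-compactness route is what defeats this ``moving witness'' problem, by assembling the per-$\epsilon$ transport plans into one limiting plan living on $R^*$; the remaining care is bookkeeping --- the completion of subdistributions to distributions needed for Strassen, and the tightness and convergence of marginals when the support is infinite.
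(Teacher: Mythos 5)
The paper does not prove this statement---it is imported verbatim from~\cite{DesLavTra08}---so there is no in-paper proof to compare against; your argument therefore has to stand on its own, and in essence it does. The backward direction and the reduction of the forward direction to showing that $R^*=\{(u,v)\mid d^*(u,v)=0\}=\bigcap_{\epsilon>0}\sim_\epsilon$ is a $0$-bisimulation are both correct, and you have correctly isolated the genuine difficulty: for infinite $X$ the naive limit $\bigcap_n\sim_{1/n}(X)$ can strictly contain $R^*(X)$ because the witnessing $x\in X$ may vary with $n$, so continuity from above of $\tau_a(v)$ alone does not suffice. Your fix---assemble, for each pair $(u,v)\in R^*$ and each label, a $1/n$-coupling supported on $\sim_{1/n}$, extract a pointwise limit on the fixed countable product of supports, and show the limit is a coupling concentrated on $R^*$---is sound. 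Two remarks. First, you do not need Strassen plus the completion of subdistributions: the equivalence between the marginal condition and the existence of an $\epsilon$-coupling is exactly Proposition~\ref{prop:eps-coal-bisim} (from~\cite{TraQapl11}), applied to the $1/n$-bisimulation $\sim_{1/n}$; using it removes most of the bookkeeping you worry about. Second, the one step where your stated tool is insufficient is the passage to the limit of the total-mass condition: Fatou gives $\sum\omega\le\liminf\sum\omega_{n_k}$, which is the wrong direction for concluding $\mu(S)\le\sum\omega$ from $\mu(S)\le\sum\omega_{n_k}+1/n_k$. You need $\lim_k\sum\omega_{n_k}=\sum\omega$, which does hold, but via a two-step dominated-convergence argument using the uniform bounds $\omega_{n_k}(x,y)\le\nu(y)$ (summable in $y$ for fixed $x$) and $\sum_y\omega_{n_k}(x,y)\le\mu(x)$ (summable in $x$); this is also what upgrades the first marginal of $\omega$ from $\le\mu$ to $=\mu$, which your final chain $\tau_a(u)(X)=\omega(X\times S)\le\omega(S\times R^*(X))\le\tau_a(v)(R^*(X))$ silently uses (note also that the last link is an inequality, not the equality you wrote, since the second marginal of $\omega$ is only dominated by $\tau_a(v)$). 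With that step spelled out, the proof is complete. A shortcut worth knowing: if one is content with finitely supported distributions, a pigeonhole on the witnesses $x_n\in X$ already gives $\bigcap_n\sim_{1/n}(X)=R^*(X)$ and the coupling machinery is unnecessary; your route is what handles the countably infinite supports that the paper's setting allows.
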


One of the nice properties of this distance is its relatively easy computability, both regarding complexity and regarding computation "by hand". It suffices to define an $\epsilon$-bisimulation to obtain an upper bound to the distance between states. In Example~\ref{ex:e-bisim}, we have $d^*(s,t)=\gamma$, and, indeed, we easily see that the probabilities of $s$ and $t$ seen as processes are within $\gamma$. Also here, the Kantorovich distance between these states is 1. One way to see that is by using the functional $\langle a \rangle^n 1 $, which evaluates to $(1-\epsilon)^n$ on $s$, and to 1 on $t$. So $d_K(s,t)=1$.
Similarly, as already noted in the introduction, for the channel in Figure~\ref{fig:channel}, we can prove that $d^*(c_\epsilon,c_\gamma) = |\epsilon-\gamma|=d_K(c_\epsilon,c_\gamma)$, for $\epsilon,\gamma>0$, but $d^*(c_\epsilon,c_0) = \epsilon\neq d_K(c_\epsilon,c_0)$.

\section{The L\'evy-Prokhorov distance lifting} \label{sec:LP}
We will now explain in details what we think should be a behavioral metric on LMPs. These details may not be new, or may appear obvious to the expert reader, but we find it useful to spell them out here. 
When looking for a distance $d^\dagger$ on the states of an LMP, one  observes 
that states are both targets of distributions and  (a set of) distributions themselves (since they are \textit{defined} by their outgoing transition distributions): that is, in  Example~\ref{ex:e-bisim}, the states $s$ and $t$ can be viewed as distributions over the set $\{s_1,s_2,t_1\}$.  As there are already a few distances on distributions that were studied outside computer science and concurrency theory, they give a starting point for distances on states: we could say, given such a distance $\delta$:
\begin{equation}
d^\dagger(s,t):= \sup_{a\in\cA}\,\delta(\tau_{a}(s),\tau_{a}(t)),
    \qquad\mbox{ for $s,t\in S$.}
    \label{eq:def_d_naive}
\end{equation}
We will use the letter $d$ for distances on states, and the symbol $\delta$ for distances on distributions. 
Analysing the options of this equation in full generality is outside the scope of the current paper\footnote{If the reader knows such works, we would be happy to list them here.}, but some distances, such as $\delta_K^d$ (and the one we introduce below  $\delta^d_{LP}$) have a particularity. They are defined using a parameter $d$, a basic distance on the space where the distributions are defined. This is desirable for concurrency theory because distributions on states that are different but close (or even bisimilar!) should also be close: so a starting metric $d$ on the state space is of key importance to account for these similarities between states/processes.  Note that a few papers~\cite{Pappas07, Alfaro04} start from metrics on the states that account for extra information (e.g. a distance on the labels/observations that are attached to states)--but this goes beyond the behaviour of states that we want to capture here (although they constitute an interesting line of work to extend our method). 

A needed property for $d^\dagger$ to be a \textit{behavioural} distance, as also pointed out in~\cite{BaldanBKK18} for $\delta_K^d$, 
 is  that it accords with the starting distance $d$ used in  $\delta^d$, as follows.
\begin{equation}\label{eq:d-dagger-fixpoint}
    d^\dagger(s,t) = \sup_{a\in\cA}\,\delta^{d^\dagger}(\tau_{a}(s),\tau_{a}(t)),
    \qquad\mbox{ for $s,t\in S$.}
\end{equation}
This says that the distance between states viewed as simple members of the space (the left hand part of the equality), is the same as their distance when viewed as processes, that is, their outgoing transition distributions. 
 So this really says that $d^\dagger$ treats states according to their behaviour. 
Technically, this is saying that we are looking for a distance fixpoint $d^\dagger$ of the functional
\begin{equation}
    \Delta(d)(s,t):= \sup_{a\in\cA}\,\delta^d(\tau_{a}(s),\tau_{a}(t)), 
    \qquad\mbox{ for $s,t\in S$.}
    \label{eq:Delta}
\end{equation}
Of course, this is exactly Equation~\eqref{eq:DeltaK}, for $\delta^d:= \delta_K^d$, the fixpoint property of the Kantorovich behavioural distance, observed by~\cite{Desharnais02b, breugel05},  and explicitly proven in~\cite{breugelSW07}. 
One of our goals in this work was to find a similar functional that would have the $\epsilon$-distance $d^*$ as its greatest fixed point. That is, we are seeking a suitable distance $\delta^d$ on distributions. 

\begin{remark} \label{rem:unit-isometry}
Another property that one might consider natural and one might expect from such a distance $\delta^d$ is that the distance between the Dirac distributions on states is the same as the starting distance between them. That is, for a distance $d$ on states, one might expect
\begin{equation}
d(s,t)= \delta^d(1_{s}, 1_{t}),
    \qquad\mbox{ for $s,t\in S$,}
    \label{eq:d_and_Dirac}
\end{equation}
where we write $1_s$ for the Dirac distribution on $s\in S$. This is an interesting property of the distance $\delta^d$ 
--- it is a stronger version of the non-expansiveness of the unit of the distribution monad, showing that the unit is an isometry, and we will return to it in Section~\ref{sec:axiom} below where we show that it holds for the Lévy-Prokhorov distance, for any starting distance $d$ on states. It also holds for the (undiscounted) Kantorovich distance---this is easy to prove using the Wasserstein formulation of the Kantorovich distance,  whereas it does not hold for the total variation distance~\cite{MioSV24}. Therefore, this property is neither unique to the Lévy-Prokhorov distance that we are interested in, nor it is ``behavioural'' in the sense that we explained so far: it does not take into account the transition structure $\tau$ of the states. For a somewhat behavioural explanation, note that combining it with Equation~\eqref{eq:def_d_naive}, this condition implies that a pair of states $s', t'$ each having a single outgoing transition as a Dirac on state $s$ and $t$ respectively, would get the same distance as $s$ and $t$. In particular, such a distance does not discount the future. 
\end{remark}

\subsection{The L\'{e}vy-Prokhorov distance on distributions} 			
\label{s:LP}

While examining equation~\eqref{eq:d-dagger-fixpoint} and trying to make $d^*$ fit into it as $d^\dagger$ we came up with the following distance on distributions. Only afterwards, we discovered that the distance  was actually known as the Lévy-Prokhorov lifting of a distance $d$ to distributions. In this section we introduce the distance and provide an example that illustrates it.

\begin{definition}[
Lévy-Prokhorov distance~\cite{prokhorov}] 
\label{def:dLP}
Let $\langle S,d \rangle$ be a metric space
; we endow $\cD S$ with the  pseudo-metric
$\delta^d_{LP}: \cD S\times \cD S \to [0,1]$, defined as 
$$ 
\delta^d_{LP}(\mu_{0},\mu_{1}) = \inf \{\epsilon\mid \forall X\subseteq S
: \mu_{i}(X) \leq \mu_{1-i}(X^d_{\epsilon})+\epsilon, \mbox{ for } i = 0,1\}, 
$$
where $X^d_{\epsilon}=\{y\mid\exists x\in X : d(x,y)< \epsilon\}$. \\ We call $\delta^d_{LP}$ the L\'evy-Prokhorov (LP, for short) distance.
\end{definition}

In the following example we define simple probability measures that help illustrate the need for the extra ``$+\,\epsilon$'' in this definition and the need for the ``$\epsilon$ ball around $X$''. At first sight, $\delta^d_{LP}$ looks very much like the total variation distance, but this ``$\epsilon$ ball around $X$'' makes it very different.

\begin{example}
\label{ex:motivate_eps_balls}
Consider the set of states $S =  \{x_\gamma \mid \gamma\in [0,1]\}$ and a distance on these states given by  $d(x_\gamma, x_\xi) = |\gamma-\xi|$, for $\gamma,\xi\in[0,1]$. In a probabilistic transition system, one could imagine that $x_\gamma$ has an $a$-loop to itself with probability $1-\gamma$, $\gamma\in [0,1]$, 
as depicted below (adding transitions out of the states to help see their differences as processes in a transition system). 
We now define a family of distributions on these states, and we picture them below as states with an outgoing transition without label (one could imagine an $a$-label). Let $\nu_{\gamma} = \gamma 1_{x_1} + (1-\gamma)1_{x_\gamma}$, for $\gamma \in [0,\frac{1}{2})$, with  $1_{x}$ the Dirac measure on $x$. The distributions  $\nu_{\gamma}$ and $\nu_{0}$ are illustrated below. For a fixed $\gamma$, these two distributions are actually non-zero on a three-state space $S' = \{x_0, x_1, x_\gamma\}$ . 
$$\xymatrix{
\nu_\gamma\ar[d]_{\gamma}\ar[r]_{1-\gamma} & x_{\gamma}\ar@(dr,ur)[]_{a,1-\gamma} 
&&  \nu_0\ar[r]_{1} & x_0\ar@(dr,ur)[]_{a,1} \\x_1
}
$$
  We show that  $\delta^d_{LP}(\nu_\gamma,\nu_0) = \gamma$. Let $1-\gamma> \epsilon>\gamma$. We show that all such $\epsilon$ satisfy the inequalities in the definition of $\delta^d_{LP}$ and any $\epsilon \le \gamma$ does not, so the distance, being the infimum over all such values, is indeed $\gamma$. In particular, we need to check that for all $X \subseteq S$, $\nu_\gamma(X) \le \nu_0(X_\epsilon^d) + \epsilon$ and $\nu_0(X) \le \nu_\gamma(X_\epsilon^d) + \epsilon$. The cases $X = S$ and $X = \emptyset$ are trivial, since the distributions are full.
  For the   sets $X$ of size two, we have 
$$\begin{array}{lllll}
    \mbox{For } X:=\{x_0,x_1\}:\quad 
        &\nu_\gamma(X) = \gamma &\!\!\leq 1 + \epsilon&=\nu_0(S)+\epsilon
        &=\nu_0(X^d_{\epsilon})+\epsilon\\
        &\nu_0(X) = 1 &\!\!\leq 1 + \epsilon&= \nu_\gamma(S)+\epsilon
        &=\nu_\gamma(X^d_{\epsilon})+\epsilon\\
    \mbox{For } X:= \{x_1,x_\gamma\}:\quad  
        &\nu_\gamma(X) = 1 &\!\!\leq  1 + \epsilon &= \nu_0(S) + \epsilon&= \nu_0(X^d_{\epsilon})+\epsilon\\ 
        &\nu_0(X) = 0 &\!\!\leq 1 + \epsilon &=\nu_\gamma(S)+\epsilon
        &=\nu_\gamma(X^d_{\epsilon})+\epsilon\\
    \mbox{For } X:= \{x_0,x_\gamma\}:\quad           &\nu_\gamma(X) =  1 - \gamma &\!\!\leq  1 + \epsilon &= \nu_0( \{x_0,x_\gamma\}) + \epsilon &= \nu_0(X^d_{\epsilon})+\epsilon \\
    &\nu_0(X) = 1  &\!\!\leq (1-\gamma) + \epsilon &=\nu_\gamma(\{x_0,x_\gamma\})+\epsilon
        &=\nu_\gamma(X^d_{\epsilon})+\epsilon\\
\end{array}$$
The last inequality is satisfied thanks to the room given by the ``$+\epsilon$''.
For sets of size one, omitting the checks where the probability is zero on the left-hand side of the inequality:
$$\begin{array}{lllll}
    \mbox{For } X:=\{x_1\}:\quad 
        &\nu_\gamma(X) = \gamma &\!\!\leq 0+\epsilon &=\nu_0(\{x_1\})+\epsilon
        &=\nu_0(X^d_{\epsilon})+\epsilon\\
    \mbox{For } X:= \{x_\gamma\}:\quad           
        &\nu_\gamma(X) = 1-\gamma &\!\!\leq 1+\epsilon &= \nu_0( \{x_\gamma,x_0\}) + \epsilon &= \nu_0(X^d_{\epsilon})+\epsilon \\
    \mbox{For } X:= \{x_0\}:\quad  
        &\nu_0(X) = 1 &\!\!\leq 1-\gamma +\epsilon &= \nu_\gamma( \{x_\gamma,x_0\}) + \epsilon&= \nu_\gamma(X^d_{\epsilon})+\epsilon. 
\end{array}$$
In the one but last inequality, it is crucial that $x_\gamma$ be within $\epsilon$ of $x_0$, which includes $x_0$  in the $\epsilon$ ball around $x_\gamma$. With $X$ in place of $X^d_\epsilon$, the inequality would not be satisfied, and not even with  $X^d_0$, a set where states at distance zero are included (like bisimilar states--of which there are none here), because we would obtain: 
$$
    \nu_\gamma(\{x_\gamma\})=1-\epsilon \not\leq 0+ \epsilon = \nu_0(\{x_\gamma\}^d_0) +\epsilon.
$$
The last inequality also illustrates this, and also the need for the ``$+\epsilon$'', as the $\epsilon$-ball itself is not enough for the inequality to be satisfied. 
Moreover, for a pair of distributions $\nu_\gamma,\nu_\xi$ in our family one can similarly prove that $\delta^d_{LP}(\nu_\gamma,\nu_\xi) = |\gamma- \xi|$.

Finally, we note that for $\epsilon \le\gamma$ we have $X_\epsilon^d = X$ for any $X \subseteq S$ and hence, e.g., for $X = \{x_\gamma\}$, as already noted above we have 
$\nu_\gamma(X)=1-\gamma \not\leq 0+ \epsilon = \nu_0(X^d_\epsilon) +\epsilon$ as $\epsilon \le \gamma < \frac{1}{2}$.
\end{example}

One can notice that when viewed as processes, $\nu_\gamma$ and $\nu_0$ are the same as $s$ and $t$ of Example~\ref{ex:e-bisim}.
 Technically, it is rather $\tau_a(s)$ and $\tau_a(t)$ of course.

 Once one has seen the Lévy-Prokhorov  distance on distributions  it seems not surprising that it has some link with the  $\epsilon$-distance. However, our surprise was the other way around, when we realized that the  distance on distributions $\delta^d$ that we had to concoct to make $d^*$ a fixed point of $\Delta$ in Equation~\eqref{eq:Delta} already existed.

\subsection{The L\'evy-Prokhorov distance on LMPs}
We are now ready to define the functional on distances over LMPs that we were looking for. We expect a fixpoint of this functional to be a behavioral metric, but more importantly we expect  $d^*$ to be the greatest fixpoint of it. 

\begin{definition}
\label{Def:DeltaLP}
Let $\langle S,d \rangle$ be a pseudometric space, on which an LMP $\tau:S\to(\cD S)^\cA$ is defined. Let $s_0$ and $s_1$ be states. We define the functional $\Delta_{LP}$ as
\begin{align*}
\Delta_{LP}(d)(s_0,s_1)
&= 
\sup_{a\in \cA} \delta_{LP}^d(\tau_{a}(s_{0}),\tau_{a}(s_{1}) )   
\\
&= 
\sup_{a\in \cA}\inf \{
\epsilon\mid (\forall A\subseteq S: \tau_{a}(s_{i})(A)\leq \tau_{a}(s_{1-i})(A^d_{\epsilon})+\epsilon, \quad i = 0,1)
\}.
\end{align*}
\end{definition}
Hence, for a fixed pseudometric space $\langle S,d \rangle$, $\Delta_{LP}$ applied to $d$ gives a new pseudometric on $S$. Clearly, this can be seen as a functional on $\PMet1$, mapping $\langle S,d \rangle$ to $(S, \Delta_{LP}(d))$.

Before proving that $d^*$ is a fixpoint, we show how 
the supremum over actions can safely be replaced by inserting a universal quantification on actions inside the set, as will be useful in Proposition~\ref{d*_isFixp}. 
\begin{lemma}
\label{lemma:sup_out}
Let $\langle S,d \rangle$ be a pseudometric space, on which an LMP $\tau:S\to(\cD S)^\cA$ is defined. Let $s_0$ and $s_1$ be states. Then
\begin{align*}
\Delta_{LP}(d)(s_0,s_1)
&= 
\inf \{
\epsilon\mid (\forall a\in \cA,\forall A\subseteq S: \tau_{a}(s_{i})(A)\leq \tau_{a}(s_{1-i})(A^d_{\epsilon})+\epsilon, \quad i = 0,1)
\}.
\end{align*}
\end{lemma}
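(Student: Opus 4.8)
The plan is to read off the two sides as statements about a family of $\epsilon$-thresholds and then justify exchanging the outer supremum over actions with the infimum over $\epsilon$. For each $a\in\cA$ write $P_a(\epsilon)$ for the predicate ``$\forall A\subseteq S:\ \tau_a(s_i)(A)\le\tau_a(s_{1-i})(A^d_\epsilon)+\epsilon$ for $i=0,1$'', so that $\delta^d_{LP}(\tau_a(s_0),\tau_a(s_1))=\inf\{\epsilon\mid P_a(\epsilon)\}$. The left-hand side of the lemma is then $\sup_{a\in\cA}\inf\{\epsilon\mid P_a(\epsilon)\}$ and the right-hand side is $\inf\{\epsilon\mid \forall a\in\cA:\ P_a(\epsilon)\}$, so the whole claim is the identity $\sup_a\inf\{\epsilon\mid P_a(\epsilon)\}=\inf\{\epsilon\mid \forall a:\ P_a(\epsilon)\}$.

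First I would record the one structural fact that drives everything: each $P_a$ is upward closed in $\epsilon$. Indeed, for $\epsilon\le\epsilon'$ we have $A^d_\epsilon\subseteq A^d_{\epsilon'}$ (if $d(x,y)<\epsilon$ then $d(x,y)<\epsilon'$), whence $\tau_a(s_{1-i})(A^d_\epsilon)\le\tau_a(s_{1-i})(A^d_{\epsilon'})$, while the additive slack $+\epsilon$ only grows; hence $P_a(\epsilon)$ implies $P_a(\epsilon')$. I would also note that $P_a(1)$ always holds, since $\tau_a(s_i)(A)\le 1\le\tau_a(s_{1-i})(A^d_1)+1$, so each infimum is taken over a nonempty subset of $[0,1]$ and lands in $[0,1]$.

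Then I would prove the two inequalities separately. For $\le$: any $\epsilon$ satisfying $\forall a:\ P_a(\epsilon)$ bounds every per-action infimum, $\inf\{\epsilon'\mid P_a(\epsilon')\}\le\epsilon$, hence bounds their supremum; taking the infimum over all such $\epsilon$ gives the left-hand side $\le$ the right-hand side. For $\ge$: fix any $\epsilon$ strictly larger than the left-hand side $\sup_a\inf\{\epsilon'\mid P_a(\epsilon')\}$. Then $\epsilon$ exceeds each per-action infimum, so for every $a$ there is a witness $\epsilon''<\epsilon$ with $P_a(\epsilon'')$, and by upward closure $P_a(\epsilon)$ holds. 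Thus $\forall a:\ P_a(\epsilon)$ holds, so $\epsilon$ witnesses the right-hand side and the right-hand side is $\le\epsilon$; letting $\epsilon$ decrease to the left-hand side gives right $\le$ left.

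The only delicate point is that none of these infima need be attained, so I would avoid ever asserting $P_a$ at its own threshold value; the argument uses \emph{strict} sub-threshold witnesses together with upward closure, which is exactly what legitimises the supremum--infimum exchange here. I expect this monotone-threshold bookkeeping, rather than anything special to $\delta^d_{LP}$ itself, to be the main thing to get right.
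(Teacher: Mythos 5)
Your proposal is correct and follows essentially the same route as the paper: both isolate the per-action predicate $P_a(\epsilon)$, observe its upward closure in $\epsilon$ (the paper's property~\eqref{eq:gamma>epsilon}), prove the $\le$ direction by set containment of threshold sets, and prove the $\ge$ direction by taking any $\epsilon$ strictly above the supremum and using strict sub-threshold witnesses plus monotonicity. Your explicit note that $P_a(1)$ always holds (so the infima are over nonempty sets) is a small additional precaution the paper leaves implicit, but otherwise the arguments coincide.
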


\begin{proof}
For $a\in \cA$ and $\epsilon>0$, we define the following predicate:
$$
P(a,\epsilon):= (\forall A\subseteq S: \tau_{a}(s_{i})(A)\leq \tau_{a}(s_{1-i})(A^d_{\epsilon})+\epsilon, \quad i = 0,1).
$$
This family of predicates satisfies 
\begin{equation}
    \mbox{if }\gamma>\epsilon, P(a,\epsilon)\Rightarrow P(a,\gamma).
    \label{eq:gamma>epsilon}
\end{equation} We now prove that
$$
 \sup_{a\in \cA} \inf \{\epsilon\mid P(a,\epsilon)\}=\inf \{\epsilon\mid (\forall a\in \cA: P(a,\epsilon))\}
$$
\begin{itemize}
    \item[$\leq$:]  The following sequence of arguments shows this inequality.
    \begin{align*}
        \{\epsilon\mid P(a_0,\epsilon)\} 
            &\supseteq \{\epsilon\mid (\forall a\in \cA: P(a,\epsilon))\} &&\mbox{for all } a_0\in\cA, \mbox{ so}\\  
        \inf \{\epsilon\mid P(a_0,\epsilon)\} 
            &\leq \inf\{\epsilon\mid (\forall a\in \cA: P(a,\epsilon))\} &&\mbox{for all } a_0\in\cA, \mbox{ and so}\\
        \sup_{a\in \cA}\inf\{\epsilon\mid P(a,\epsilon)\} 
            &\leq \inf\{\epsilon\mid (\forall a\in \cA: P(a,\epsilon))\}. 
    \end{align*}
    \item[$\geq$:]  Let 
    $\alpha = \sup_{a\in \cA}\inf\{\epsilon\mid P(a,\epsilon)\}.$ Then for all $a\in \cA$,  $\alpha\geq \inf\{\epsilon\mid P(a,\epsilon)\}.$ Let $\gamma>\alpha$. Then we have
    \begin{align*}
        \mbox{for all } a\in \cA 
            &\quad\gamma\in \{\epsilon\mid P(a,\epsilon)\}
                &\mbox{by~\eqref{eq:gamma>epsilon} and because } \gamma> \inf\{\epsilon\mid P(a,\epsilon)\}\\
        \mbox{so  } 
            &\quad\gamma\in \cap_{a\in \cA} \{\epsilon\mid P(a,\epsilon)\}\\
        \mbox{so  } 
        &\quad\gamma\in  \{\epsilon\mid (\forall a\in \cA: P(a,\epsilon))\}\\
        \mbox{so  } 
        &\quad\gamma\geq  \inf\{\epsilon\mid (\forall a\in \cA: P(a,\epsilon))\}.
\end{align*}
This implies $\alpha\geq  \inf\{\epsilon\mid (\forall a\in \cA: P(a,\epsilon))\}$ as well. Otherwise, suppose $\alpha<\iota := \inf\{\epsilon\mid (\forall a\in \cA: P(a,\epsilon))\}$. Then $\alpha<(\alpha+\iota)/2<\iota$, so the result above with $\gamma = (\alpha+\iota)/2$ yields $(\alpha+\iota)/2\geq\iota$, a contradiction.
\end{itemize}

\end{proof}

One remarkable property of fixpoints of this functional is that they define  $\epsilon$-bisimulations.

\begin{theorem}
\label{t:Reps_is_epsbisim}
Let  $\tau:S\to(\cD S)^\cA$ be an LMP with $\langle S,d \rangle$ a pseudometric space. If $d$ is a fixpoint of $\Delta_{LP}$, then for any $\epsilon>0$, $R_\epsilon := \{(s,t)\in S\times S \mid d(s,t)<\epsilon\}$, is an $\epsilon$-bisimulation.
\end{theorem}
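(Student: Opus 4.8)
The plan is to verify the two defining conditions of an $\epsilon$-bisimulation for $R_\epsilon$ directly from the fixpoint equation, exploiting the fact that the ``$\epsilon$-ball'' appearing in the Lévy-Prokhorov distance is exactly the relational image under $R_\epsilon$. First I would note that $R_\epsilon$ is symmetric, which is immediate since $d$ is symmetric, so that it remains only to check the $\epsilon$-simulation condition. The crucial preliminary identity is
$$R_\epsilon(X) = \{y \mid \exists x \in X : d(x,y) < \epsilon\} = X^d_\epsilon \quad \text{for every } X \subseteq S,$$
which holds by unwinding the definitions of $R_\epsilon$ and of $X^d_\epsilon$; this is what makes the relational and metric pictures coincide.

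Next, I would fix $(s,t) \in R_\epsilon$, so that $d(s,t) < \epsilon$. Since $d$ is a fixpoint of $\Delta_{LP}$, Lemma~\ref{lemma:sup_out} gives $d(s,t) = \inf\{\gamma \mid Q(\gamma)\}$, where $Q(\gamma)$ abbreviates the conjunction, over all $a \in \cA$ and $A \subseteq S$, of the inequalities $\tau_a(s)(A) \le \tau_a(t)(A^d_\gamma) + \gamma$ and $\tau_a(t)(A) \le \tau_a(s)(A^d_\gamma) + \gamma$. From $d(s,t) < \epsilon$ I would select a value $\gamma$ with $d(s,t) < \gamma < \epsilon$. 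The point is that $Q$ is upward closed (this is exactly property~\eqref{eq:gamma>epsilon}, now using $A^d_\gamma \subseteq A^d_{\gamma'}$ together with $\gamma \le \gamma'$), so $\gamma$ lying strictly above the infimum forces $Q(\gamma)$ to hold.

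Finally, I would weaken the inequalities in $Q(\gamma)$ to the required $\epsilon$-simulation inequalities. Since $\gamma < \epsilon$ we have $A^d_\gamma \subseteq A^d_\epsilon$, hence $\tau_a(t)(A^d_\gamma) \le \tau_a(t)(A^d_\epsilon)$ by monotonicity of subdistributions, and combining this with $\gamma \le \epsilon$ yields
$$\tau_a(s)(A) \le \tau_a(t)(A^d_\gamma) + \gamma \le \tau_a(t)(A^d_\epsilon) + \epsilon = \tau_a(t)(R_\epsilon(A)) + \epsilon,$$
where the last step invokes the preliminary identity. The symmetric inequality follows identically from the second half of $Q(\gamma)$. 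This establishes the $\epsilon$-simulation condition for every $(s,t) \in R_\epsilon$, which together with symmetry of $R_\epsilon$ completes the proof.

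The step I expect to require the most care is the passage through the infimum: one cannot assume that the infimum defining $d(s,t)$ is attained, so the strictness $d(s,t) < \epsilon$ in the definition of $R_\epsilon$ is essential, as it supplies the slack needed to pick a witnessing $\gamma$ strictly below $\epsilon$ at which the Lévy-Prokhorov inequalities already hold. This is precisely where the upward-closure property~\eqref{eq:gamma>epsilon} from the proof of Lemma~\ref{lemma:sup_out} does the real work.
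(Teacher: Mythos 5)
Your proof is correct and follows essentially the same route as the paper's: use the strict inequality $d(s,t)<\epsilon$ to extract a witness $\gamma<\epsilon$ at which the Lévy-Prokhorov inequalities hold, then relax $\gamma$ to $\epsilon$ and identify $X^d_\epsilon$ with $R_\epsilon(X)$. The only (immaterial) difference is that you invoke Lemma~\ref{lemma:sup_out} to obtain a single $\gamma$ uniform over all actions, whereas the paper fixes an action $a$ first and chooses $\gamma$ per action directly from $\delta^d_{LP}(\tau_a(s_0),\tau_a(s_1))<\epsilon$.
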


\begin{proof}
Let $d=\Delta_{LP}(d)$ be a fixpoint.
Let $R_\epsilon=\{(s,t)\mid d(s,t)< \epsilon\}$.
We show it is an $\epsilon$-bisimulation.
Let $s_0 R_\epsilon s_1$, i.e., $d(s_0,s_1)< \epsilon$, and also $\Delta_{LP}(d)(s_0,s_1)< \epsilon$, because $d$ is a fixpoint.  Let $a\in\cA$; then $\delta_{LP}^d(\tau_{a}(s_{0}),\tau_{a}(s_{1}))<\epsilon$. So there is some $\gamma<\epsilon$ such that for $i=1,2$ and $X\subseteq S$
\begin{align*}
\tau_a(s_i)(X) 
&\leq \tau_a(s_{1-i})(X^d_\gamma) + \gamma & \mbox{by definition of $\delta_{LP}$}\\
&\leq \tau_a(s_{1-i})(X^d_\epsilon) + \epsilon &\mbox{because }X^d_\gamma\subseteq X^d_\epsilon \mbox{ and } \gamma\leq\epsilon.\\
&= \tau_a(s_{1-i})(R_\epsilon(X)) + \epsilon & \mbox{by definition of  }R_\epsilon(X).
\end{align*}
So $R_\epsilon$ is an $\epsilon$-bisimulation, as wanted.
\end{proof}
This theorem is nice in itself but it also has an important corollary, that $d^*$ is greater than or equal to all fixpoints of $\Delta_{LP}$.

\begin{corollary}
\label{p:d*greater}
 $d^*$ is greater than or equal to all fixpoints, i.e.,
$$\Delta_{LP}(d) = d \quad \Longrightarrow \quad d \sqsubseteq d^*.$$
\end{corollary}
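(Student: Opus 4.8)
The plan is to derive the inequality directly from Theorem~\ref{t:Reps_is_epsbisim}, so the corollary amounts to an infimum-passing step. First I would unfold the order convention: by definition $d \sqsubseteq d^*$ means the pointwise inequality $d(s,t) \ge d^*(s,t)$ for all $s,t \in S$. Hence it suffices to fix a fixpoint $d = \Delta_{LP}(d)$ and an arbitrary pair $(s,t)$, and establish $d^*(s,t) \le d(s,t)$.

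The key step is to feed arbitrary overapproximations of $d(s,t)$ into Theorem~\ref{t:Reps_is_epsbisim}. Concretely, I would take any $\epsilon$ with $d(s,t) < \epsilon \le 1$. Then $(s,t) \in R_\epsilon = \{(s',t') \mid d(s',t') < \epsilon\}$, and since $d$ is a fixpoint, Theorem~\ref{t:Reps_is_epsbisim} tells us that $R_\epsilon$ is an $\epsilon$-bisimulation. Therefore $s \sim_\epsilon t$, and by the very definition of the $\epsilon$-distance $d^*(s,t) = \inf\{\epsilon' \mid s \sim_{\epsilon'} t\}$ we conclude $d^*(s,t) \le \epsilon$. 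As this holds for every $\epsilon \in (d(s,t), 1]$, taking the infimum over all such $\epsilon$ yields $d^*(s,t) \le d(s,t)$, which is exactly $d \sqsubseteq d^*$.

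I do not expect a genuine obstacle here, since the substantive content lives in Theorem~\ref{t:Reps_is_epsbisim}; the remaining care concerns boundary and well-definedness issues. The one degenerate case is $d(s,t) = 1$, where no $\epsilon \in [0,1]$ lies strictly above $d(s,t)$ and the infimum argument has no witnesses to range over; this is dispatched trivially because $d^*$ is $[0,1]$-valued, so $d^*(s,t) \le 1 = d(s,t)$. I would also remark, for completeness, that the symmetry of $R_\epsilon$---needed for it to be an \emph{$\epsilon$-bisimulation} and not merely an $\epsilon$-simulation---is automatic from the symmetry of the pseudometric $d$, and is in any case already part of the conclusion of Theorem~\ref{t:Reps_is_epsbisim}, so no additional verification is required.
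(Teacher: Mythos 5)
Your proposal is correct and follows essentially the same route as the paper: apply Theorem~\ref{t:Reps_is_epsbisim} to every $\epsilon$ strictly above $d(s,t)$ to get $s\sim_\epsilon t$, then pass to the infimum in the definition of $d^*$. Your extra remark on the degenerate case $d(s,t)=1$ (where no such $\epsilon$ exists in $[0,1]$) is a small point of care that the paper's proof glosses over, but it changes nothing of substance.
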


\begin{proof}
Let $d$ be a fixpoint of $\Delta_{LP}$ and let $d(s,t) = \epsilon$. By the previous theorem, $s\sim_\gamma t$ for all $\gamma>\epsilon$. 
Since $d^*(s,t) = \inf \{e\mid s\sim_e t\}$, we obtain $d^*(s,t)\leq \epsilon = d(s,t)$, as wanted.
\end{proof}

\begin{example} 
One other fixpoint of $\Delta_{LP}$ is the  exact bisimulation distance, which we write $d_\sim$. It is equal to 0 if the states are bisimilar, otherwise it is 1. 
 We first prove that bisimilar states  $s,t$ satisfy  $\Delta_{LP}(d_\sim)(s, t) = 0$.
Consider $R$ to be bisimilarity. Because $R$ is a bisimulation, we have that for all  $s_0R s_1$, $a\in\cA$ and $X\subseteq S$,
$$\tau_a(s_0)(X)  
\leq  \tau_a(s_1)(R(X))
= \tau_a(s_1)(X^{d_\sim}_{\gamma}), $$
for any $\gamma\in (0,1)$, because  $X^{d_\sim}_{\gamma}$ is just  
the smallest $R$-closed set that contains $X$. A symmetric argument applies with 0 and 1 interchanged. Consequently, $\Delta_{LP}(d_\sim)(s_0, s_1) = 0$, as wanted.

Conversely, if $\Delta_{LP}(d_\sim)(s_0, s_1) = 0$, then there is a sequence of $\gamma_n\in (0,1)$ that converges to zero, for which we have
$$
\tau_a(s_0)(X) 
\leq  \tau_a(s_1)(X^{d_\sim}_{\gamma_n}) +\gamma_n
 = \tau_a(s_1)(R(X)) +\gamma_n .
$$
Since the property is monotone, recall also~\eqref{eq:gamma>epsilon}, 
for all $\gamma\in (0,1)$ we have
$$
\tau_a(s_0)(X) 
\leq  \tau_a(s_1)(X^{d_\sim}_\gamma) +\gamma
 = \tau_a(s_1)(R(X)) +\gamma .
$$
So $\tau_a(s_0)(X)\leq \tau_a(s_1)(R(X))$, as wanted.
\end{example}

We prove that $\Delta_{LP}$ is monotonic, after the following simple lemma. 
\begin{lemma} 
\label{lemma:d1<d2}
If $d_2\sqsubseteq d_1$, we have $X^{d_2}_\epsilon\subseteq X^{d_1}_\epsilon$.
\end{lemma}

\begin{proof} 
Assume $d_1\leq d_2$. Let $y\in X^{d_2}_\epsilon$. Then there is some $x\in X$ such that $d_2(x,y)< \epsilon$. This gives $d_1(x,y)\leq d_2(x,y)<\epsilon$. So $y\in X^{d_1}_\epsilon$.
\end{proof}

\begin{proposition}\label{prop:deltaLP_mono}
$\Delta_{LP}$ is monotonic.
\end{proposition}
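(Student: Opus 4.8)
The plan is to unwind the definition of monotonicity with respect to the order $\sqsubseteq$ and reduce it to a pointwise comparison. Concretely, I must show that $d_2 \sqsubseteq d_1$ implies $\Delta_{LP}(d_2) \sqsubseteq \Delta_{LP}(d_1)$. Since $\sqsubseteq$ is the opposite pointwise order, this is equivalent to the statement that $d_1 \leq d_2$ (pointwise) implies $\Delta_{LP}(d_1) \leq \Delta_{LP}(d_2)$ (pointwise). So I will fix two pseudometrics with $d_1 \leq d_2$ on $S$ together with two states $s_0, s_1$, and argue about the values $\Delta_{LP}(d_i)(s_0,s_1)$.

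First I would rewrite both values using the single-infimum form from Lemma~\ref{lemma:sup_out}, so that
$$\Delta_{LP}(d)(s_0,s_1) = \inf\{\epsilon \mid P_d(\epsilon)\},$$
where $P_d(\epsilon)$ abbreviates the predicate ``$\forall a \in \cA, \forall A \subseteq S: \tau_a(s_i)(A) \leq \tau_a(s_{1-i})(A^d_\epsilon) + \epsilon$ for $i=0,1$''. The key step is then to establish the set inclusion $\{\epsilon \mid P_{d_2}(\epsilon)\} \subseteq \{\epsilon \mid P_{d_1}(\epsilon)\}$, because the infimum of a larger set is no larger, which immediately gives $\Delta_{LP}(d_1)(s_0,s_1) \leq \Delta_{LP}(d_2)(s_0,s_1)$.

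To prove the inclusion I would fix an $\epsilon$ satisfying $P_{d_2}(\epsilon)$ and verify $P_{d_1}(\epsilon)$. From $d_1 \leq d_2$, Lemma~\ref{lemma:d1<d2} gives the ball inclusion $A^{d_2}_\epsilon \subseteq A^{d_1}_\epsilon$ for every $A \subseteq S$. Since each $\tau_a(s_{1-i})$ is a (sub)distribution and hence monotone on subsets, this yields $\tau_a(s_{1-i})(A^{d_2}_\epsilon) \leq \tau_a(s_{1-i})(A^{d_1}_\epsilon)$. Chaining this with the assumed inequality $\tau_a(s_i)(A) \leq \tau_a(s_{1-i})(A^{d_2}_\epsilon) + \epsilon$ produces $\tau_a(s_i)(A) \leq \tau_a(s_{1-i})(A^{d_1}_\epsilon) + \epsilon$, for all $a$ and $A$ and both $i$, which is exactly $P_{d_1}(\epsilon)$.

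The argument carries no genuine difficulty; the only thing to watch is the bookkeeping between the opposite order $\sqsubseteq$ and the direct pointwise order, so that the direction of the ball inclusion from Lemma~\ref{lemma:d1<d2} and the resulting inequality between the $\Delta_{LP}$ values both point the right way. Everything else reduces to monotonicity of measures and of the infimum.
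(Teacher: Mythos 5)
Your proof is correct and follows essentially the same route as the paper's: both reduce monotonicity to the set inclusion $\{\epsilon \mid P_{d_2}(\epsilon)\} \subseteq \{\epsilon \mid P_{d_1}(\epsilon)\}$ via the ball inclusion of Lemma~\ref{lemma:d1<d2} and then compare infima. The only cosmetic difference is that you absorb the supremum over actions using Lemma~\ref{lemma:sup_out}, whereas the paper keeps the supremum outside and compares the per-action infima directly; both handle the order bookkeeping correctly.
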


\begin{proof}{}  
Let $d_1\leq d_2$. 
Let $s_0$, $s_1$ be states, let $a\in\cA$ and define the two sets
\begin{align*}
E_1&= \{
\epsilon\mid (\forall A\subseteq S:  \tau_{a}(s_{i},A)\leq \tau_{a}(s_{1-i},A^{d_1}_{\epsilon})+\epsilon, \quad i = 0,1)
\}\\
 \mbox{and} 
\quad
E_2 &= \{
\epsilon\mid (\forall A\subseteq S: \tau_{a}(s_{i},A)\leq \tau_{a}(s_{1-i},A^{d_2}_{\epsilon})+\epsilon, \quad i = 0,1)
\}.\end{align*}
Then
$E_2\subseteq E_1$. Indeed, let $\epsilon\in E_2$ and $A\subseteq S$. Then
$\tau_{a}(s_{i},A)\leq \tau_{a}(s_{1-i},A^{d_2}_{\epsilon})+\epsilon\leq \tau_{a}(s_{1-i},A^{d_1}_{\epsilon})+\epsilon$, because $A^{d_2}_\epsilon\subseteq A^{d_1}_\epsilon$ by Lemma~\ref{lemma:d1<d2}.
So $\inf E_1 \leq \inf E_2$ and hence $\Delta_{LP}(d_1)=\sup_{a\in \cA}\inf E_1\leq \sup_{a\in \cA}\inf E_2=\Delta_{LP}(d_2)$, as wanted.
\end{proof}

Here is the long-awaited result, that the distance defined by $\epsilon$-bisimulation is a fixpoint of $\Delta_{LP}$.
\begin{proposition} 
\label{d*_isFixp}
$d^*$   is a fixpoint of $\Delta_{LP}$, and hence it is the greatest fixpoint. 

\end{proposition}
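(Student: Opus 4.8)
The plan is to split the statement using the ordering that Corollary~\ref{p:d*greater} already supplies. That corollary shows every fixpoint $d$ satisfies $d \sqsubseteq d^*$, so it suffices to establish that $d^*$ is \emph{itself} a fixpoint; the ``greatest'' claim is then immediate. Moreover, since $\Delta_{LP}$ is monotonic (Proposition~\ref{prop:deltaLP_mono}) on the complete lattice of $1$-bounded pseudometrics on $S$ ordered by $\sqsubseteq$ (meets being pointwise suprema, with the zero pseudometric as $\sqsubseteq$-top), the Knaster--Tarski theorem provides a greatest fixpoint $\nu$, characterised as the $\sqsubseteq$-join of all post-fixpoints, i.e.\ of all $d$ with $d \sqsubseteq \Delta_{LP}(d)$. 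Hence it is enough to prove the single inequality that $d^*$ is a post-fixpoint, $d^* \sqsubseteq \Delta_{LP}(d^*)$: this gives $d^* \sqsubseteq \nu$, while $\nu \sqsubseteq d^*$ holds by Corollary~\ref{p:d*greater}, forcing $d^* = \nu$.

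The core of the argument is therefore the pointwise inequality $\Delta_{LP}(d^*)(s,t) \le d^*(s,t)$ for all states $s,t$. First I would fix $s,t$ (assuming $d^*(s,t)<1$, the case $d^*(s,t)=1$ being trivial) and pick any $\epsilon$ with $1 \ge \epsilon > d^*(s,t)$. Because an $\epsilon$-bisimulation remains one when $\epsilon$ is enlarged, the set $\{e \mid s \sim_e t\}$ is upward closed, so $\epsilon > d^*(s,t) = \inf\{e \mid s\sim_e t\}$ yields $s \sim_\epsilon t$; let $R$ be a symmetric $\epsilon$-bisimulation with $s\,R\,t$. The key observation is that $R$ witnesses $x \sim_\epsilon y$ for \emph{every} pair $x\,R\,y$, whence $d^*(x,y) \le \epsilon$. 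Consequently, for any $\epsilon' > \epsilon$ and any $A \subseteq S$ we obtain the inclusion $R(A) \subseteq A^{d^*}_{\epsilon'}$, since $x \in A$ with $x\,R\,y$ forces $d^*(x,y) \le \epsilon < \epsilon'$.

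With this inclusion the $\epsilon$-bisimulation condition chains into the Lévy--Prokhorov condition: for each $a \in \cA$, each $A \subseteq S$, and $i = 0,1$ (writing $s_0 = s$, $s_1 = t$),
\[
\tau_a(s_i)(A) \;\le\; \tau_a(s_{1-i})(R(A)) + \epsilon \;\le\; \tau_a(s_{1-i})(A^{d^*}_{\epsilon'}) + \epsilon \;\le\; \tau_a(s_{1-i})(A^{d^*}_{\epsilon'}) + \epsilon',
\]
using monotonicity of the measures, the inclusion above, and the symmetry of $R$. This is exactly the predicate inside the infimum of Lemma~\ref{lemma:sup_out} evaluated at $\epsilon'$, so $\Delta_{LP}(d^*)(s,t) \le \epsilon'$. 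Letting $\epsilon' \downarrow \epsilon$ and then $\epsilon \downarrow d^*(s,t)$ gives $\Delta_{LP}(d^*)(s,t) \le d^*(s,t)$, which is the post-fixpoint inequality sought, and the first paragraph then concludes $d^* = \nu$.

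I expect the main obstacle to be the bookkeeping around strict versus non-strict bounds: the ball $A^{d^*}_{\epsilon'}$ is defined with a \emph{strict} distance bound, whereas membership in $R$ only guarantees $d^*(x,y) \le \epsilon$, so the slack $\epsilon' > \epsilon$ is essential and has to be threaded simultaneously through the ball inclusion $R(A)\subseteq A^{d^*}_{\epsilon'}$ and the additive constant before the infima are taken. A secondary, easy-to-overlook point is the step $\epsilon > d^*(s,t) \Rightarrow s \sim_\epsilon t$, which relies on upward closure of the witnessing set rather than on the infimum being attained.
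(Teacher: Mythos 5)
Your proposal is correct, but it follows the paper's route for only half of the argument. The half you prove by hand --- the post-fixpoint inequality $d^*\sqsubseteq\Delta_{LP}(d^*)$, obtained by extracting a symmetric $\epsilon$-bisimulation $R$ witnessing $s\sim_\epsilon t$ for $\epsilon>d^*(s,t)$, establishing $R(A)\subseteq A^{d^*}_{\epsilon'}$ for $\epsilon'>\epsilon$, and threading the strict-ball slack through the additive constant --- is essentially the paper's ``$\leq$'' direction verbatim, including the remark that the open ball forces $\epsilon'>\epsilon$. Where you genuinely diverge is the converse inequality: the paper proves $\Delta_{LP}(d^*)\sqsubseteq d^*$ directly by showing $E^\Delta\subseteq E^*$, i.e.\ for every $\epsilon$ satisfying the L\'evy-Prokhorov condition at $(s_0,s_1)$ it exhibits a concrete $\epsilon$-bisimulation, namely $R_\epsilon\cup\{(s_0,s_1)\}$ with $R_\epsilon$ the largest $\epsilon$-bisimulation, via the reverse inclusion $X^{d^*}_\epsilon\subseteq R^+_\epsilon(X)$. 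You instead invoke Knaster--Tarski on the lattice of $1$-bounded pseudometrics to get an abstract greatest fixpoint $\nu$, deduce $d^*\sqsubseteq\nu$ from the post-fixpoint property, and close the sandwich with Corollary~\ref{p:d*greater}. This is sound and economical (only one inequality is checked by hand), but it leans on two facts the paper never makes explicit: that the pseudometrics on $S$ under $\sqsubseteq$ form a complete lattice (your observation that $\sqsubseteq$-meets are pointwise suprema and a $\sqsubseteq$-top exists does suffice abstractly; note, though, that $\sqsubseteq$-joins are \emph{not} pointwise infima, since pointwise infima of pseudometrics can violate the triangle inequality), and that $\Delta_{LP}$ maps pseudometrics to pseudometrics, which the paper only asserts after Definition~\ref{Def:DeltaLP}. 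The paper's direct argument buys a concrete witness --- each $\epsilon\in E^\Delta$ yields an actual $\epsilon$-bisimulation containing $(s_0,s_1)$, in the spirit of Theorem~\ref{t:Reps_is_epsbisim} --- and needs no completeness of the ambient lattice; yours buys brevity at the cost of that extra infrastructure.
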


\begin{proof}
Let $s_{0}$, $s_{1}\in S$. We want that $\Delta_{LP} (d^*)(s_{0},s_{1})=d^*(s_{0},s_{1})$. 
Consider the following two sets 
\begin{align*}
E^\Delta &= \{
\epsilon\mid \forall a\in\cA: (\forall X\subseteq S:  \tau_{a}(s_{i},X)\leq \tau_{a}(s_{1-i},X^{d^*}_{\epsilon})+\epsilon, \quad i = 0,1)
\}\\
 \mbox{and} 
\quad
E^* &= \{\epsilon\mid s_{0}\sim_{\epsilon}s_{1}\}\\
&=
\{\epsilon\mid \exists R \mbox{ an $\epsilon$-bisimulation s.t. $s_{0} R s_{1}$, that is, for all $(t_{0},t_{1})\in R$, we have}\\
&\mbox{~~~~}\quad \forall a\in\cA:(\forall X\subseteq S: \tau_{a}(t_{i},X)\leq \tau_{a}(t_{1-i},R(X))+\epsilon, \quad i = 0,1)\}.
\end{align*}
Then  $\Delta (d^*)(s_{0},s_{1}) = \inf E^\Delta$ by Lemma~\ref{lemma:sup_out} and $d^*(s_{0},s_{1})=\inf E^*$. 
We prove that $\inf E^\Delta=\inf E^*$:

\begin{itemize}
\item[$\geq$:] We prove that $E^\Delta\subseteq E^*$. Let $\epsilon\in E^{\Delta}$. Let $R_{\epsilon}= \{(x,y)\mid x\sim_{\epsilon} y\}$, that is, the biggest $\epsilon$-bisimulation. We need to prove that $s_0\sim_\epsilon s_1$. For that we prove that $R_\epsilon^+ := R_\epsilon\cup\{(s_0,s_1)\}$ is an $\epsilon$-bisimulation. Let $a\in\cA$ and $X\subseteq S$.
Then $X_{\epsilon}^{d^*}\subseteq R^+_{\epsilon}(X)$. Indeed, let $y\in X_{\epsilon}^{d^*}$ then $d^*(y,x)< \epsilon$ for some $x\in X$. So
, because $d^*(y,x)= \inf \{e\mid y\sim_e x\}$, we have $y\sim_{\gamma}x$ for some $\gamma<\epsilon$, and hence $y\sim_{\epsilon}x$ as every $\gamma$-bisimulation is an $\epsilon$-bisimulation. Hence $y\in R^+_{\epsilon}(X).$ We only need to prove the bisimulation condition for the pair $(s_0,s_1)$.
Then, for $i=0,1$ we have
\begin{align*}
\tau_{a}(s_{i},X)
    &\leq \tau_{a}(s_{1-i},X_{\epsilon}^{d^*})+\epsilon 
	   &&\mbox{by choice of } \epsilon\\
    &\leq \tau_{a}(s_{1-i},R^+_{\epsilon}(X))+\epsilon 
	   &&\mbox{because } X_{\epsilon}^{d^*}\subseteq R^+_{\epsilon}(X).  
\end{align*}
So $s_0\sim_\epsilon s_1$  and hence $\epsilon \in E^*$.
\item[$\leq$:] Let $\epsilon \in E^*$. We prove that for all $\gamma>\epsilon$, we have $\gamma\in E^\Delta$. This will prove that $\inf E^\Delta \leq\epsilon$. Since $\epsilon$ is arbitrary, this will give the result: $\inf E^\Delta \leq\inf E^*$.\\
So let $\gamma>\epsilon$. 
Because $\epsilon \in E^*$, we have $s_0\sim_\epsilon s_1$, and hence 
there is some   $\epsilon$-bisimulation $R$ such that $s_{0} R s_{1}$.      We want $\gamma\in E^\Delta$, so let $a\in\cA$ and $X\subseteq S$. We want $\tau_{a}(s_{i},X)\leq \tau_{a}(s_{1-i},X^{d^*}_{\gamma})+\gamma, \quad i = 0,1$. 
First observe that $R(X) \subseteq X^{d^*}_{\gamma}$. Indeed, if $y\in R(X)$, then there is some $x\in X$ such that $y R x$. Because $R$ is an $\epsilon$-bisimulation,  $d^*(y,x)\leq \epsilon$, so $y\in X^{d^*}_{\gamma}$ (we cannot say $y\in X^{d^*}_{\epsilon}$ because these sets are \emph{open} balls). Then, for $ i = 0,1$:
\begin{align}
\tau_{a}(s_{i},X)
&\leq \tau_{a}(s_{1-i},R(X))+\epsilon, &\mbox{because $R$ is an $\epsilon$-bisimulation}\nonumber\\
&\leq \tau_{a}(s_{1-i},X^{d^*}_{\gamma})+\gamma, 
	&\mbox{because } R(X) \subseteq X^{d^*}_{\gamma} \mbox{and } \epsilon<\gamma.\nonumber
\end{align}
So, $\gamma \in E^\Delta$, as wanted. 
\end{itemize}
So we have proven that $d^*$ is a fixpoint of $\Delta_{LP}$, and it is the greatest by Proposition~\ref{p:d*greater}.

\end{proof}


At this point we remark again that even if it seems not surprising that  the Lévy-Prokhorov  distance on distributions  has some link with the  $\epsilon$-distance, going from the basic distance on distributions $\delta_{LP}$ to a meaningful distance on LMP is not direct (one has to find the fixed-point properties). 
Moreover, as far as we know, this is the first time that a distance on distributions other than the Kantorovich distance is used for characterizing behavioural distances. 

\subsection{The L\'evy-Prokhorov lifting of the subdistribution functor to \protect{$\PMet1$}}

Using the LP distance lifting, we define a functor $\cD_{LP}$ on $\PMet1$, the category of 1-bounded pseudometric spaces  with nonexpansive functions, as follows:
On objects $\langle X,d \rangle$, we have 
$$
\cD_{LP}\langle X,d \rangle = (\cD X, \delta^d_{LP})$$
and on morphisms 
$f\colon \langle X_0,d_0 \rangle \to \langle X_1,d_1 \rangle$ we set $\cD_{LP} f = \cD f$, that is
$$\cD_{LP} f \colon \cD_{LP}\langle X_0,d_0 \rangle \to \cD_{LP}\langle X_1,d_1 \rangle \textrm{ with } \varphi \mapsto \lambda y.\, \varphi(f^{-1}(\{y\})).$$ 

\begin{proposition}\label{prop:DLP_nonexp}
$\cD_{LP}$ is a functor on $\PMet1$.
\end{proposition}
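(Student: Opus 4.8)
The plan is to discharge three obligations: that $\cD_{LP}\langle X,d\rangle = (\cD X, \delta^d_{LP})$ is a legitimate object of $\PMet1$, i.e.\ a $1$-bounded pseudometric space; that for every non-expansive $f$ the map $\cD f$ is non-expansive between the lifted spaces; and that $\cD_{LP}$ preserves identities and composition. The last point is essentially free: since $\cD_{LP} f = \cD f$ and $\cD$ is already a functor on $\Sets$, we get $\cD_{LP}(\mathrm{id}) = \cD(\mathrm{id}) = \mathrm{id}$ and $\cD_{LP}(g\circ f) = \cD(g\circ f) = \cD g\circ \cD f = \cD_{LP} g\circ \cD_{LP} f$ at once, provided we already know the arrows land in $\PMet1$. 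So the real content is the first two obligations.

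For the object part, symmetry is immediate because the defining condition of $\delta^d_{LP}(\mu_0,\mu_1)$ is already symmetric in its arguments (it quantifies over $i=0,1$), so the admissible set of $\epsilon$ is unchanged under swapping $\mu_0$ and $\mu_1$. Reflexivity $\delta^d_{LP}(\mu,\mu)=0$ follows from $X\subseteq X^d_\epsilon$ for every $\epsilon>0$ (take the witness $x$ itself, using $d(x,x)=0<\epsilon$), so $\mu(X)\le\mu(X^d_\epsilon)\le\mu(X^d_\epsilon)+\epsilon$ holds for all $\epsilon>0$ and the infimum is $0$. The value is $1$-bounded because $\epsilon=1$ is always admissible, as $\mu_i(X)\le 1\le \mu_{1-i}(X^d_1)+1$ for subdistributions. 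The substantive axiom is the triangle inequality, which rests on the ball-composition inclusion $(X^d_\alpha)^d_\beta\subseteq X^d_{\alpha+\beta}$, a direct consequence of the triangle inequality of $d$. Given $\alpha>\delta^d_{LP}(\mu_0,\mu_1)$ and $\beta>\delta^d_{LP}(\mu_1,\mu_2)$, both admissible since the admissible set is upward closed (the analogue of~\eqref{eq:gamma>epsilon}), I would chain $\mu_0(X)\le\mu_1(X^d_\alpha)+\alpha\le\mu_2((X^d_\alpha)^d_\beta)+\alpha+\beta\le\mu_2(X^d_{\alpha+\beta})+(\alpha+\beta)$, and symmetrically, so $\alpha+\beta$ is admissible for $(\mu_0,\mu_2)$; passing to the infimum gives $\delta^d_{LP}(\mu_0,\mu_2)\le\delta^d_{LP}(\mu_0,\mu_1)+\delta^d_{LP}(\mu_1,\mu_2)$.

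For the morphism part, let $f\colon\langle X_0,d_0\rangle\to\langle X_1,d_1\rangle$ be non-expansive and $\varphi,\psi\in\cD X_0$, recalling $\cD f(\varphi)(B)=\varphi(f^{-1}(B))$ for $B\subseteq X_1$. I want to show every $\epsilon$ admissible for $(\varphi,\psi)$ under $\delta^{d_0}_{LP}$ is admissible for $(\cD f(\varphi),\cD f(\psi))$ under $\delta^{d_1}_{LP}$, which yields $\delta^{d_1}_{LP}(\cD f(\varphi),\cD f(\psi))\le\delta^{d_0}_{LP}(\varphi,\psi)$. The key is the inclusion $(f^{-1}(B))^{d_0}_\epsilon\subseteq f^{-1}(B^{d_1}_\epsilon)$: if $z$ lies in the left set, some $w\in f^{-1}(B)$ has $d_0(w,z)<\epsilon$, so $f(w)\in B$ and, by non-expansiveness, $d_1(f(w),f(z))\le d_0(w,z)<\epsilon$, whence $f(z)\in B^{d_1}_\epsilon$. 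Instantiating admissibility for $(\varphi,\psi)$ at $A=f^{-1}(B)$ then gives $\cD f(\varphi)(B)=\varphi(f^{-1}(B))\le\psi((f^{-1}(B))^{d_0}_\epsilon)+\epsilon\le\psi(f^{-1}(B^{d_1}_\epsilon))+\epsilon=\cD f(\psi)(B^{d_1}_\epsilon)+\epsilon$, and symmetrically.

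I expect the two set-theoretic inclusions for the $\epsilon$-balls --- $(X^d_\alpha)^d_\beta\subseteq X^d_{\alpha+\beta}$ for the triangle inequality and $(f^{-1}(B))^{d_0}_\epsilon\subseteq f^{-1}(B^{d_1}_\epsilon)$ for non-expansiveness --- to be the crux; both reduce cleanly to the triangle inequality of $d$ and to non-expansiveness of $f$. The only genuine subtlety is the interaction between the \emph{open} balls and the infima: because the defining condition need not hold at the infimum itself, I work throughout with values strictly above the relevant distances, relying on upward-closedness of the admissible set, and only pass to the infimum at the very end.
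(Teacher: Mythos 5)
Your proof is correct and its essential content---establishing non-expansiveness of $\cD f$ via the inclusion $(f^{-1}(B))^{d_0}_\epsilon\subseteq f^{-1}(B^{d_1}_\epsilon)$ derived from non-expansiveness of $f$, and then comparing the admissible sets of $\epsilon$---is exactly the paper's argument. The additional verification that $\delta^d_{LP}$ satisfies the pseudometric axioms (via $(X^d_\alpha)^d_\beta\subseteq X^d_{\alpha+\beta}$) is sound but is taken as given in the paper, which treats the L\'evy-Prokhorov pseudometric as known from Definition~\ref{def:dLP} and states that non-expansiveness of $\cD_{LP}f$ is the only thing left to prove.
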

\begin{proof}
The only thing to prove is that $\cD_{LP}f$ is nonexpansive.  Let $\varphi_0, \varphi_1 \in \cD_{LP}\langle X_0,d_0 \rangle$. Then (we omit the symmetric inequality for simplicity)
\begin{eqnarray*}
    \delta^{d_1}_{LP}(\cD_{LP} f(\varphi_0), \cD_{LP} f(\varphi_1))
    & = & \inf \{ \epsilon \mid \forall B \subseteq X_1 : \varphi_0(f^{-1}(B)) \le \varphi_1(f^{-1}(B_\epsilon^{d_1})) + \epsilon\}\\
    & \stackrel{(*)}{\le} & \inf \{ \epsilon \mid \forall B \subseteq X_1 : \varphi_0(f^{-1}(B)) \le \varphi_1((f^{-1}B)_\epsilon^{d_0}) + \epsilon\}\\
    & = & \inf \{ \epsilon \mid \forall A \in \{f^{-1}(B)\mid B\subseteq X_1\}: \varphi_0(A) \le \varphi_1(A_\epsilon^{d_0}) + \epsilon\}\\
    & \stackrel{(**)}{\le} & \inf \{ \epsilon \mid \forall A \subseteq X_0 : \varphi_0(A) \le \varphi_1(A_\epsilon^{d_0}) + \epsilon\}\\
    & = & \delta^{d_0}_{LP}(\varphi_0, \varphi_1)
\end{eqnarray*}
where the marked inequalities are justified below. For the first one, marked with $(*)$, let 
$$\mbox{~~~~~~}
V = \{ \epsilon \mid \forall B \subseteq X_1 : \varphi_0(f^{-1}(B)) \le \varphi_1((f^{-1}B)_\epsilon^{d_0}) + \epsilon\} \mbox{ and} $$  
$$W = \{ \epsilon \mid \forall B \subseteq X_1 : \varphi_0(f^{-1}(B)) \le \varphi_1(f^{-1}(B_\epsilon^{d_1})) + \epsilon\}.$$
We show that $V \subseteq W$. Let $\epsilon \in V$. For every $B \subseteq X_1$,  let us first show that
$ f^{-1}(B)_\epsilon^{d_0} \subseteq f^{-1}(B_\epsilon^{d_1}) $. Indeed, take $x\in f^{-1}(B)_\epsilon^{d_0} $ and let $a \in f^{-1}(B)$ be such that $d_0(x,a) < \epsilon$. Then by nonexpansivity of $f$, $$d_1(f(x),f(a)) \le d_0(x,a) < \epsilon$$
and so $f(x) \in B_\epsilon^{d_1}$, because $f(a) \in B$. So $x \in f^{-1}(B_\epsilon^{d_1})$. Now combining this with the inequality condition in $V$ we get 
$$\varphi_0(f^{-1}(B)) \le \varphi_1(f^{-1}(B)_\epsilon^{d_0}) + \epsilon \le \varphi_1(f^{-1}(B_\epsilon^{d_1})) + \epsilon$$
and so $\epsilon \in W$.
\\
For the inequality marked with (**) , observe that if $\epsilon$ satisfies the inequality for all $A\subseteq X_0$ then it does for all $A\in \{f^{-1}(B)\mid B\subseteq X_1\}$. So the infimum is taken on a possibly bigger set on the left-hand side of the inequality, which concludes the argument.
\end{proof}

Just like for the Kantorovich-lifting of the distribution functor, we can prove that the L\'evy-Prokhorov lifting of the distribution functor is locally nonexpansive, in the next proposition. For this reason, note that given pseudometric spaces $\langle X,d_X\rangle$ and $\langle Y,d_Y\rangle$ in $\PMet1$, the hom-set $X\to Y$ of all nonexpansive maps from $X$ to $Y$ carries a metric defined by
$$d_{X\to Y}(f_1,f_2) = \sup_{x \in X} d_Y(f_1(x),f_2(x)).$$

\begin{proposition}
\label{prop:MLP_locnonexp}
The functor $\cD_{LP}$ is locally nonexpansive, that is, for $f_{1}, f_{2}\in X\to Y$
$$\delta_{\cD_{LP}X\to \cD_{LP}Y}(\cD_{LP}f_{1},\cD_{LP}f_{2}) \leq d_{X\to Y}(f_{1},f_{2}) .
$$
\end{proposition}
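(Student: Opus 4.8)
The plan is to reduce the statement, which compares two maps in the hom-metric of $\cD_{LP}X \to \cD_{LP}Y$, to a pointwise bound on individual distributions. First I would unfold the hom-metric $\delta_{\cD_{LP}X\to\cD_{LP}Y}$ exactly as $d_{X\to Y}$ is defined, namely as a supremum over arguments; since $\cD_{LP}f_i = \cD f_i$ and these are nonexpansive by Proposition~\ref{prop:DLP_nonexp}, it suffices to show that for every $\varphi \in \cD X$,
$$\delta^{d_Y}_{LP}(\cD f_1(\varphi), \cD f_2(\varphi)) \le c, \qquad\text{where } c := d_{X\to Y}(f_1,f_2) = \sup_{x\in X} d_Y(f_1(x),f_2(x)),$$
because taking the supremum over $\varphi$ on the left then delivers the claim. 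Here I use that $\cD f_i(\varphi)(B) = \varphi(f_i^{-1}(B))$ for $B \subseteq Y$.

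The main step, and the only real content, is a single set inclusion. Fix $\varphi$ and any $\epsilon > c$; I will show that $\epsilon$ belongs to the set whose infimum defines $\delta^{d_Y}_{LP}(\cD f_1(\varphi), \cD f_2(\varphi))$. Given $B \subseteq Y$ and $x \in f_i^{-1}(B)$, we have $f_i(x) \in B$ and $d_Y(f_i(x), f_{1-i}(x)) \le c < \epsilon$, so $f_{1-i}(x) \in B^{d_Y}_\epsilon$; hence
$$f_i^{-1}(B) \subseteq f_{1-i}^{-1}(B^{d_Y}_\epsilon).$$
Monotonicity of the subdistribution $\varphi$ then gives $\cD f_i(\varphi)(B) = \varphi(f_i^{-1}(B)) \le \varphi(f_{1-i}^{-1}(B^{d_Y}_\epsilon)) = \cD f_{1-i}(\varphi)(B^{d_Y}_\epsilon) \le \cD f_{1-i}(\varphi)(B^{d_Y}_\epsilon) + \epsilon$, for both $i = 0,1$, the two directions being symmetric since $c$ is symmetric in $f_1, f_2$. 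Thus $\epsilon$ satisfies both defining inequalities of the Lévy-Prokhorov distance.

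Since this holds for every $\epsilon > c$, the infimum defining $\delta^{d_Y}_{LP}$ is at most $c$, and taking the supremum over $\varphi \in \cD X$ concludes. I do not expect a genuine obstacle here: the argument is short and the slack $+\epsilon$ is not even needed, as the inclusion already yields the inequality without it. The one point requiring care — exactly as in the earlier proofs involving the open balls $X^d_\epsilon$ — is that one must take $\epsilon$ strictly larger than $c$ rather than $\epsilon = c$: the quantity $d_Y(f_i(x), f_{1-i}(x))$ may equal $c$, and the strict inequality $c < \epsilon$ is precisely what places $f_{1-i}(x)$ inside the \emph{open} ball $B^{d_Y}_\epsilon$.
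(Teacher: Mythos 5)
Your proof is correct and follows essentially the same route as the paper's: reduce to a fixed $\varphi \in \cD X$ via the supremum defining the hom-metric, and establish the set inclusion $f_i^{-1}(B) \subseteq f_{1-i}^{-1}(B^{d_Y}_\epsilon)$ from nonexpansiveness of the pointwise distance. Your handling of the open ball --- taking $\epsilon$ strictly greater than $d_{X\to Y}(f_1,f_2)$ and then passing to the infimum --- is in fact slightly more careful than the paper's version, which substitutes $\alpha = d_{X\to Y}(f_1,f_2)$ directly and glosses over the $\le$ versus $<$ distinction in the definition of $B^{d_Y}_\alpha$.
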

\begin{proof}
Let $f_1, f_2\in X \to Y$. We have that 
$$\delta_{\cD_{LP}X\to \cD_{LP}Y}(\cD_{LP}f_{1},\cD_{LP}f_{2}) = \sup_{\varphi \in \cD X}\delta_{LP}^{d_Y}(\cD_{LP}f_{1}(\varphi),\cD_{LP}f_{2}(\varphi)).$$
Let $\varphi\in \cD X$ and recall that $\cD_{LP}f_{i}(\varphi) = \varphi(f_i^{-1}(\cdot))$. We will show that 
$$\delta_{LP}^{d_Y}(\varphi(f_1^{-1}(\cdot)), \varphi(f_2^{-1}(\cdot))) \le d_{X\to Y}(f_{1},f_{2}).$$
Let $\alpha = d_{X\to Y}(f_{1},f_{2})= \sup_{x \in X} d_Y(f_1 (x),f_2(x))$. Let $B\subseteq Y$.
We  show $f_1^{-1}(B)\subseteq f_2^{-1}(B^{d_Y}_\alpha)$
\begin{align*}
f_1^{-1}(B)
&=\{ x \mid \exists y\in B : y = f_1(x)\}
\\
&= \{ x \mid \exists y\in B : y=f_1(x) \mbox{ and } d_Y(y,f_2(x))\leq\alpha\}
&\mbox{since } d_Y(f_1 (x),f_2(x))\leq\alpha
\\
&\subseteq  \{ x \mid \exists y\in B : d_Y(y,f_2(x))\leq\alpha\}
\\
&=  \{ x \mid f_2(x)\in B^{d_Y}_\alpha\}
\\
&= f_2^{-1}(B^{d_Y}_\alpha).
\end{align*}

This implies that $\alpha$ satisfies $(\forall B\subseteq Y : \varphi(f_1^{-1}(B))\leq \varphi(f_2^{-1}(B^{d_Y}_\alpha))+\alpha)$, and similarly with 1 and 2 inverted.
\begin{align*}
\delta_{LP}^{d_Y}(\varphi(f_1^{-1}(\cdot)),\varphi(f_2^{-1}(\cdot)))
&= \inf \{\epsilon\mid \forall B\subseteq Y : \varphi(f_1^{-1}(B))\leq \varphi(f_2^{-1}(B^{d_Y}_\epsilon))+\epsilon\}
\\
&\leq \alpha \,\, = \,\, d_{X\to Y}(f_{1},f_{2}),
\end{align*}
as wanted.

\end{proof}

\subsection{The lifted functor $\cD_{LP}$ is not a monad lifting of $\cD$}\label{sec:axiom}

Behavioural distances have been axiomatized within the line of work on quantitative equational theories~\cite{MardarePP16}. It was therefore a natural question for us whether the $\epsilon$-distance, or the Lévy-Prokhorov distance itself can be given a quantitative axiomatization. This is what we briefly investigate in this section. 

Mio et al.~\cite[Lemma 7.2, Theorem 7.7(2)]{MioSV24} have shown that: 
\begin{itemize}
\item If a monad on metric spaces is axiomatizable, then (just by being a monad on metric spaces) it has nonexpansive unit and multiplication.
\item A monad on metric spaces that is a lifting of a monad on $\Sets$ is axiomatizable with a quantitative theory. Being a lifting means that it acts on objects and arrows in metric spaces in the same way as it does in $\Sets$ and that the unit and multiplication are nonexpansive. 
\end{itemize}
 The Kantorovich lifting of $\cD$ has been axiomatized, and the total variation distance has been shown non-axiomatizable (as the unit is not nonexpansive). In the case of $\cD_{LP}$ we can see that the unit is nonexpansive, but the multiplication is not, and hence $\cD_{LP}$ is not a lifting of the monad $\cD$ to $\PMet1$ and the L\'evy-Prokhorov distance on distributions is not axiomatizable with a quantitative theory, at least not with the standard multiplication.

Recall the definitions of $\eta$ and $\mu$ for the subdistribution monad $\cD$:  
$$\eta_X(x) = 1_x, \text{ the Dirac distribution at } x; \text{ and } \mu_X(\Phi)(x)  = \sum_\varphi \Phi(\varphi)\cdot \varphi(x).$$

\begin{lemma}\label{lem:eta-nonexp}
    The unit $\eta$ of $\mathcal{D}$ is nonexpansive with respect to the Lévy-Prokhorov distance. Moreover, it is an isometry, i.e., $\delta_{LP}^d(1_x, 1_y) = d(x,y)$.
\end{lemma}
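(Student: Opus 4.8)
The plan is to prove the stronger claim---that $\eta$ is an isometry, i.e.\ $\delta^d_{LP}(1_x,1_y) = d(x,y)$---since nonexpansiveness follows immediately from $\delta^d_{LP}(1_x,1_y) \le d(x,y)$. I would establish the two inequalities separately. For the inequality $\delta^d_{LP}(1_x,1_y) \le d(x,y)$, set $\alpha = d(x,y)$ and show that every $\epsilon > \alpha$ lies in the defining set of the infimum, i.e.\ satisfies $1_x(A) \le 1_y(A^d_\epsilon) + \epsilon$ and the symmetric inequality for all $A \subseteq S$. The only nontrivial case is when $1_x(A) = 1$, which happens exactly when $x \in A$; then since $d(x,y) = \alpha < \epsilon$ we get $y \in A^d_\epsilon$ (taking $x$ itself as the witness point of $A$ within distance $\epsilon$ of $y$), so $1_y(A^d_\epsilon) = 1$ and the inequality $1 \le 1 + \epsilon$ holds. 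Hence $\delta^d_{LP}(1_x,1_y) \le \epsilon$ for all $\epsilon > \alpha$, giving $\delta^d_{LP}(1_x,1_y) \le \alpha = d(x,y)$.

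For the reverse inequality $\delta^d_{LP}(1_x,1_y) \ge d(x,y)$, I would take any $\epsilon$ in the defining set and show $\epsilon \ge d(x,y)$, whence the infimum is at least $d(x,y)$. Applying the defining inequality to the singleton $A = \{x\}$ gives $1 = 1_x(\{x\}) \le 1_y(\{x\}^d_\epsilon) + \epsilon$. If $\epsilon < d(x,y)$, then $y \notin \{x\}^d_\epsilon$ (since $\{x\}^d_\epsilon = \{z \mid d(x,z) < \epsilon\}$ and $d(x,y) \ge \epsilon$), so $1_y(\{x\}^d_\epsilon) = 0$, forcing $1 \le \epsilon$, which contradicts $\epsilon < d(x,y) \le 1$. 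Therefore every such $\epsilon$ satisfies $\epsilon \ge d(x,y)$, and taking the infimum yields $\delta^d_{LP}(1_x,1_y) \ge d(x,y)$.

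Combining the two inequalities gives the isometry $\delta^d_{LP}(1_x,1_y) = d(x,y)$, and in particular $\eta$ is nonexpansive. I do not expect any serious obstacle here: the argument is a direct unfolding of the definition of $\delta^d_{LP}$ on Dirac distributions, and the only subtlety is the careful handling of the strict versus non-strict inequality in the open-ball definition $\{x\}^d_\epsilon = \{z \mid d(x,z) < \epsilon\}$. The mild care needed is that in the first inequality we must use $\epsilon > \alpha$ (strictly) to guarantee $y \in A^d_\epsilon$, and in the second we exploit that $d(x,y) \ge \epsilon$ excludes $y$ from the open ball; the infimum then closes the gap so that the final equality holds exactly rather than up to a strict inequality.
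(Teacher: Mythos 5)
Your proof is correct and takes essentially the same route as the paper's: a direct unfolding of the definition of $\delta^d_{LP}$ on Dirac distributions, with the singleton $A=\{x\}$ serving as the witness that forces the lower bound and the strict inequality $\epsilon > d(x,y)$ handling the open-ball subtlety. The paper merely packages the same content as a four-way case analysis on whether $x$ and $y$ belong to $A$, rather than as two separate inequalities.
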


\begin{proof}
    We have 
    \begin{align*}
    \delta^d_{LP}(\eta_X(x), \eta_X(y))& = \delta^d_{LP}(1_x, 1_y)\\
    &= \inf\{\varepsilon \mid \forall A \subseteq X : 1_x(A) \le 1_y(A^d_\varepsilon) + \varepsilon, 1_y(A) \le 1_x(A^d_\varepsilon) + \varepsilon\}.
    \end{align*}
    Let $A$ be a subset of $X$. We distinguish four cases in which we consider the relevant inequalities:
    \begin{align}
        1_x(A) &\le 1_y(A^d_\varepsilon) + \varepsilon \label{LP-first-ineq}\\
       \mbox{ and~~  } 1_y(A) &\le 1_x(A^d_\varepsilon) + \varepsilon
        \label{LP-second-ineq}
    \end{align}
   \begin{enumerate}
       \item $x \notin A, y \notin A$: Then $1_x(A) = 0$ and $1_y(A) = 0$, and any $\varepsilon \ge 0$ satisfies both~(\ref{LP-first-ineq}) and~(\ref{LP-second-ineq}).  
       \item $x\notin A, y \in A$: Here $1_x(A) = 0$ and $1_y(A) = 1$, and ~(\ref{LP-first-ineq}) holds always, but~(\ref{LP-second-ineq}) holds for $\varepsilon > d(x,y)$ as then $x \in A^d_\varepsilon$ and hence $1_x(A^d_\varepsilon) = 1$, and might not hold for $\varepsilon \le d(x,y)$, e.g. when $A = \{y\}$ and $\varepsilon < d(x,y) \le 1$, as then $1_x(A^d_\varepsilon) = 0$. 
       \item $x \in A, y \notin A$: This case is dual to the second case. Here~(\ref{LP-second-ineq}) always holds, but~(\ref{LP-first-ineq}) holds for $\varepsilon > d(x,y)$ and might not hold for $\varepsilon \le d(x,y)$ as, e.g., for $A = \{x\}$ we have $y \in A^d_\varepsilon$ if and only if  $\varepsilon > d(x,y)$.
       \item $x \in A, y \in A$: Then $1_x(A) = 1$ and $1_y(A) = 1$, as well as $1_x(A^d_\varepsilon) = 1_y(A^d_\varepsilon) = 1$, and again, as in the first case, any $\varepsilon \ge 0$ satisfies both~(\ref{LP-first-ineq}) and~(\ref{LP-second-ineq}). 
   \end{enumerate}
   Hence, both inequalities are satisfied for all $A$ iff $\varepsilon > d(x,y)$, and therefore $\delta^d_{LP}(1_x, 1_y) = d(x,y)$.
\end{proof}

The unit of the distribution monad is also an isometry with respect to the Kantorovich distance (without discount), as is easy to prove using the duality with the Wasserstein distance.

However, the multiplication $\mu$ of the monad $\cD$ is not nonexpansive, as the following example shows. 

\begin{example}\label{ex:mu-not-nonexp}
    Consider $a$, $b$ and $c$ in the figure below as distributions over $\{\bot,\bullet\}$.
$$\xymatrix{
\varphi \ar[d]_{1}
	&&\psi\ar[d]_{{{1-\epsilon}}}\ar[dr]^{0<\epsilon<1}\\	
a \ar[dr]_{1} &&b\ar[rd]^{\epsilon} \ar[ld]_{1-\epsilon}&c\ar[d]^1\\
	&\bot  	&&	\bullet\ar@(dr,ur)[]_{} 
}$$
The ambient distance is $d(\bot,\bullet) = 1$,  we omit it from the notation: we write $\delta_{LP}$ for $\delta_{LP}^d$. 
Consider  the distributions $\varphi:= 1_a$ and $\psi:=(1-\epsilon)1_b + \epsilon 1_c$   over these distributions, as pictured. We collect several useful facts: 
\begin{itemize}
\item $\delta_{LP}(a,b)=\epsilon$ and 
$\delta_{LP}^{\delta_{LP}}(\varphi,\psi)=\epsilon$. 
\item $\mu \varphi = a$
 and $\mu \psi = (1-\epsilon)^2 \bot+ (1-\epsilon)\epsilon \bullet+\epsilon \bullet= (1-\epsilon)^2 \bot+ (2-\epsilon)\epsilon \bullet$. 
\item $\{\bot\}_{\gamma}=\{\bot\}$ for any $\gamma \le \epsilon$ and hence
$\mu \varphi(\{\bot\})=1$ and $\mu \psi (\{\bot\}_{\gamma})=\mu \psi (\{\bot\})= (1-\epsilon)^2$.
\end{itemize}

\noindent For $\gamma < \epsilon(2-\epsilon)$, we have $1\not\leq  (1-\epsilon)^2 +\gamma $ which therefore yields
$$\mu \varphi(\{\bot\})=1\not\leq  (1-\epsilon)^2 +\gamma = \mu \psi (\{\bot\}_{\gamma}) +\gamma .$$
\noindent This shows that $\delta_{LP}(\mu \varphi,\mu \psi) \ge \epsilon(2-\epsilon) > \epsilon = \delta_{LP}^{\delta_{LP}}(\varphi,\psi)$ and hence $\mu$ is not nonexpansive. 

\end{example}

This result is not surprising, as the binding operator $\mu$ really is a multiplication, and so it accords well with the Kantorovich metric, which does multiply the probabilities of distributions in play. We do not know yet if another operator could help express the functor $\cD_{LP}$ as a monad lifting.

\section{$\epsilon$-(Bi)simulations, coalgebraically}\label{sec:coalg-eps-bisim}

The Kantorovich behavioural distance has a coalgebraic characterization via a final coalgebra semantics. Whether the $\epsilon$-distance is obtained by finality is still open, but the results on the Kantorovich distance do not seem to directly apply here. However, as the distance is defined via $\epsilon$-bisimulations and bisimilarity, it is natural to see whether a coalgebraic semantics arising from generalizing coalgebraic bisimulations and bisimilarity is in place. The answer to this question is positive and we present the necessary observations in this section. An abstract (in this case coalgebraic) characterization allows for, on the theoretical side, deeper and clearer understanding of the notion under study, and, on the practical side, generalizations. We could use this generality to define notions of $\epsilon$-bisimulations for other types of (probabilistic) systems. 

In a nutshell, in this section we show that $\epsilon$-simulation and $\epsilon$-bisimulation have a span-diagram characterization, in a way similar to coalgebraic simulation~\cite{HughesJacobs04,Hasuo06,Hasuo10} and Aczel-Mendler coalgebraic bisimulation~\cite{AM89:ctcs,Jacobs2016-book, Rut00:tcs}. The development is using a notion of $\epsilon$-coupling defined in~\cite{TraQapl11}. For simplicity, we ignore the labels in this section, i.e., we assume there is a single label.

To start with, we recall the basic notions related to coalgebraic (bi)simulation, formulated for the functor $\mathcal D$ that we are mainly interested in in this work. Recall the definition of the sub-distribution functor $\cD$ on $\Sets$: It maps a set $X$ to $\cD X$, the set of discrete subdistributions on $X$
and on arrows $f\colon X \to Y$, $\cD f\colon \cD X \to \cD Y$ is defined by
\[
\cD f(\varphi)(y) = \sum_{x \in f^{-1}(\{y\})} \varphi(x) = \varphi(f^{-1}(\{y\}).
\]

Coalgebras of the functor $\mathcal D$ are Markov chains, formally they are pairs $(X,c)$ of a carrier set $X$ and transition map  $c\colon X \to \mathcal D X$. We often just refer to the coalgebra by the transition map $c$. Given two such coalgebras, $c\colon X \to \mathcal D X$ and $d\colon Y \to \mathcal D Y$ a coalgebra homomorphism from $(X,c)$ to $(Y,d)$ is a map $h\colon X \to Y$ making the following diagram commute:
\[
\xymatrix@R=1.5em{
X
\ar[r]^{h}
\ar[d]_{c}
&
Y
\ar[d]^{d}
\\
\cD X
\ar[r]^{\cD h}
&
\cD Y}
\]

\begin{definition}\label{def:coalg-bis} Given two coalgebras $c\colon X \to \mathcal D X$ and $d\colon Y \to \mathcal D Y$, a coalgebraic bisimulation is a relation $R \subseteq X\times Y$ such that there exists a coalgebra structure $b\colon R \to \mathcal D R$ making the two projections $\pi_1\colon R \to X$ and $\pi_2\colon R \to Y$ coalgebra homomorphisms, i.e., making the following span diagram commute: 
\[
\xymatrix@R=1.5em{
X
\ar[d]_{c}
&
R
\ar[l]_{\pi_1}
\ar[r]^{\pi_2}
\ar[d]_{b}
&
Y
\ar[d]^{d}
\\
\cD X
&
\cD R
\ar[l]_{\cD\pi_1}
\ar[r]^{\cD\pi_2}
&
\cD Y}
\]
\end{definition}

It is well known, see~\cite{deVink99,Sokolova05} that this notion of bisimulation corresponds to the standard notion of probabilistic bisimulation due to Larsen and Skou~\cite{Larsen91}. For the case of continuous distributions, taking a cospan instead of a span helps avoiding the need for analytic spaces~\cite{Danoscmcs}. Back to the discrete case, one way to define the coalgebra structure $b$ is using the notion of a coupling: If for all $(x,y) \in R$, there is a coupling $\beta$ for $\mu = c(x)$ and $\nu = d(y)$, then setting $b(x,y) = \beta$ provides the needed transition structure. Recall that $\beta \in \cD R$ is a \emph{coupling} of $\mu \in \cD X$ and $\nu \in \cD Y$ if its marginals are $\mu$ and $\nu$, respectively, that is
$$\sum_{y\in Y} \beta(x,y) = \mu(x), \quad \sum_{x\in X} \beta(x,y) = \nu(y) .$$
Note that this definition of a coupling, fitting the definition of coalgebraic bisimulation for the subprobability distribution functor, ensures that a coupling of two subdistributions exists only if they have the same total mass. One could also define couplings by adding a dummy element and hence viewing a subdistribution as a full distribution, which provides a more general notion. However, these details are not relevant for what we are really interested in, which are $\epsilon$-bisimulations. 

We are going to use a relaxed notion of $\epsilon$-coupling~\cite{TraQapl11} to define $\epsilon$-(bi)simulation in analogy with the coalgebraic definition above. Before we recall those, let us first mention coalgebraic simulations, due to~\cite{HughesJacobs04,Hasuo06,Hasuo10} with a coalgebraic formulation for LMPs already in~\cite{Desharnais99c}.

For this, note that an ordered functor $F$ is a functor with order on each object $FX$. The subdistribution functor $\cD$ is ordered, e.g., by pointwise order. Note that this order becomes trivial in case of the distribution functor which is the reason why we work with subdistributions here. Coalgebraic simulation can be defined for ordered functors, here we only recall the definition in the special case of the subdistribution functor $\cD$. All notions  involve \emph{lax} and \emph{oplax morphism}. 
\begin{definition}\label{def:lax-hom}
  A \emph{lax homomorphism} from $(X,c)$ to $(Y,d)$ is a morphism $l\colon X \to Y$ with the property that $d\circ l \sqsubseteq \cD l \circ c$, i.e., it makes the left lax diagram below commute. An \emph{oplax homomorphism} from $(X,c)$ to $(Y,d)$ is a lax homomorphism for the dual/opposite order, that is a morphism $o\colon X \to Y$ that makes the right diagram below commute:
  \[
{\xymatrix@R=1.5em{
X
\ar[r]^{l}
\ar[d]_{c}
\ar@{}[dr]|{\sqsupseteq}
&
Y
\ar[d]^{d}
&&
X
\ar[r]^{o}
\ar[d]_{c}
\ar@{}[dr]|{\sqsubseteq}
&
Y
\ar[d]^{d}
\\
\cD X
\ar[r]_{\cD l}
&
\cD Y
&&
\cD X
\ar[r]_{\cD o}
&
\cD Y}
}
\]
The \emph{order on distributions} is defined pointwise: For a set $A$ and $\varphi, \psi \in \cD A$, we have $\varphi \sqsubseteq \psi$ iff for all $a \in A$, $\varphi(a) \le \psi(a)$.  
\end{definition}

Note that, equivalently, $\varphi \sqsubseteq \psi$ iff for all $B \subseteq A$, $\varphi(B) \le \psi(B)$, where the right-to-left implication follows from instantiating on singleton subsets and the left-to-right implication follows from the additivity of distributions. 

Initially, generic coalgebraic simulations have been studied for ordered functors, by Hughes and Jacobs~\cite{HughesJacobs04}, using a notion of lax relation lifting. On the other hand, Hasuo~\cite{Hasuo06,Hasuo10} discovered (generalizations of) forward, backward, and hybrid simulations in Kleisli categories and moreover showed in~\cite{Hasuo10} that Hughes-Jacobs simulations are a special case. While Kleisli categories are appealing for soundness results, and monads suitable for traces come equipped with an order, we may safely ignore all Kleisli aspects here. We now recall the following notion of simulations~\cite{Hasuo06}, formulated for general categories of coalgebras (as long as the functor comes equipped with an order).

\begin{definition}\label{def:coalg-sims}
    A \emph{forward simulation} from $(X,c)$ to $(Y,d)$ is a lax homomorphism from $(Y,d)$ to $(X,c)$. A \emph{backward simulation} from $(X,c)$ to $(Y,d)$ is an oplax homomorphism. Combining the lax-commuting boxes leads to \emph{hybrid, forward-backward and backward-forward, simulations}. In particular, Hughes-Jacobs simulation is a forward-backward simulation: It is a relation $R$ on which there exists a coalgebra structure $b\colon R \to \cD R$ making $\pi_1$ a forward simulation from $(X,c)$ to $(R,b)$ and $\pi_2$ a backward simulation from $(R,b)$ to $(Y,d)$, depicted:
    \[
\xymatrix@R=1.5em{
X
\ar[d]_{c}
\ar@{}[dr]|{\sqsubseteq}
&
R
\ar[l]_{\pi_1}
\ar[r]^{\pi_2}
\ar[d]_{b}
\ar@{}[dr]|{\sqsubseteq}
&
Y
\ar[d]^{d}
\\
\cD X
&
\cD R
\ar[l]^{\cD\pi_1}
\ar[r]_{\cD\pi_2}
&
\cD Y}
\]
\end{definition}

Clearly, a symmetric simulation on a single system is a bisimulation. 

In the rest of this section, we will show that $\epsilon$-(bi)simulations can be depicted similarly. For this we will need bounded-lax-commutativity of morphisms as well as a notion of $\epsilon$-coupling (which then directly gives an $\epsilon$-relation-lifting). We will focus here on the functor $\cD$ only. These notions can be generalized to generic coalgebras for functors with suitable structure. However, those observations are beyond the scope of this paper and we will elaborate on them elsewhere.

\begin{definition}\label{def:eps-lax-hom}
  An \emph{$\epsilon$-lax homomorphism} from $(X,c)$ to $(Y,d)$ is a morphism $l\colon X \to Y$ that makes the left $\epsilon$-lax diagram below commute, i.e., $d \circ l \sqsubseteq_\varepsilon \cD l\circ c$. 
  An \emph{$\epsilon$-oplax homomorphism} from $(X,c)$ to $(Y,d)$ is a morphism $o\colon X \to Y$ with the property that $\cD o \circ c \sqsubseteq_\varepsilon d\circ o$, i.e., it makes the right $\epsilon$-lax diagram below commute. 
  \[
{\xymatrix@R=1.5em{
X
\ar[r]^{l}
\ar[d]_{c}
\ar@{}[dr]|{_\varepsilon\sqsupseteq}
&
Y
\ar[d]^{d}
&&
X
\ar[r]^{o}
\ar[d]_{c}
\ar@{}[dr]|{\sqsubseteq_\varepsilon}
&
Y
\ar[d]^{d}
\\
\cD X
\ar[r]_{\cD l}
&
\cD Y
&&
\cD X
\ar[r]_{\cD o}
&
\cD Y}
}
\]
The \emph{$\epsilon$-order on distributions} is defined by: For a set $A$ and $\varphi, \psi \in \cD A$, we have $\varphi \sqsubseteq_\varepsilon \psi$ iff for all $B \subseteq A$, $\varphi(B) \le \psi(B) + \varepsilon$.  
\end{definition}

\begin{remark}\label{rem:}
    Note that $\sqsubseteq_\varepsilon$ is not  an order relation, namely it is not transitive. Note also that the definition of $\sqsubseteq_\varepsilon$ can not be expressed on elements (of $A$) only, as it is stronger than the property: for all $a\in A$, $\varphi(a) \le \psi(a) + \varepsilon$. Namely, the role of $\varepsilon$ here, just like in the definition of $\epsilon$-(bi)simulation is \emph{global}. 
\end{remark}

We next recall the notion of $\epsilon$-coupling from~\cite{TraQapl11}. 

\begin{definition}\label{def:eps-coupling}
Let $R \subseteq X \times Y$. A distribution $\beta \in \cD R$ is an $\epsilon$-coupling for $\mu \in \cD X$ and $\nu \in \cD Y$ iff the following three conditions hold:
\begin{itemize}
\item[1.] $\sum_{y \in Y} \beta(x,y) \le \mu(x)$, for all $x \in X$
\item[2.] $\sum_{x \in X} \beta(x,y) \le \nu(y)$, for all $y \in Y$
\item[3.] $\mu(X) \le \sum_{x \in X, y\in Y} \beta(x,y) + \varepsilon = \sum_{(x,y) \in R} \beta(x,y) + \varepsilon$.
\end{itemize}
\end{definition}

If we want to be fully precise, we should consider that $\beta \in \cD (X \times Y)$ with support contained in $R$. We trust the readers can tolerate the freedom we took to write $\beta \in \cD R$ and yet sometimes sum over all $x\in X, y \in Y$, i.e., identify $\beta \in \cD R$ with a distribution in $\cD (X\times Y)$ that assigns probability $0$ to all pairs out of $R$ and acts as $\beta$ on $R$.

Before we proceed with the main observation of this section, we prove an auxiliary property that allows for rewriting the definition of $\epsilon$-coupling.

\begin{lemma}\label{lem:eps-coupl-aux}
    Assume that Condition 1.~in Definition~\ref{def:eps-coupling} holds for $\beta \in \cD R$ with $R \subseteq X \times Y$ and $\mu \in \cD X$. Then $\mu(X) \le \sum_{x \in X, y\in Y} \beta(x,y) + \varepsilon$ (i.e., Condition 3.~in Definition~\ref{def:eps-coupling}) is equivalent to the condition
    \[
    \mu(S) \le \sum_{x \in S, y\in Y} \beta(x,y) + \varepsilon, \quad \text{for all } S \subseteq X.
    \]
\end{lemma}

\begin{proof}
    The right-to-left implication is immediate, as $X \subseteq X$. For the left-to-right implication, assume that for some $S\subseteq X$ we have 
    \[
    \mu(S) > \sum_{x \in S, y\in Y} \beta(x,y) + \varepsilon.
    \]
    Since, by assumption, for all $x \in X$, and hence for all $x \in X\setminus S$, $\mu(x) \ge \sum_{y \in Y} \beta(x,y)$, we have
    \begin{eqnarray*}
        \mu(X)  &=& \mu(S) + \mu(X\setminus S)\\
        &>& \sum_{x \in S, y\in Y} \beta(x,y) + \varepsilon + \sum_{x \in X \setminus S} \mu(x)\\
        &\ge& \sum_{x \in S, y\in Y} \beta(x,y) + \sum_{x \in X \setminus S} \sum_{y \in Y} \beta(x,y) + \varepsilon\\
        &=& \sum_{x\in X,y\in Y}\beta(x,y) + \varepsilon.
    \end{eqnarray*}
    The property now follows by contraposition. 
\end{proof}
This property again emphasises the global nature of $\varepsilon$ in our situation. As a consequence, for an $\epsilon$-coupling $\beta \in \cD R$ of $\mu \in \cD X$ and $\nu \in \cD Y$, with $R \subseteq X\times Y$, for each $S \subseteq X$: 
\[
\sum_{x\in S, y\in Y} \beta(x,y)\,\, \le\,\, \mu(S) \,\,\le \,\,\sum_{x\in S, y\in Y} \beta(x,y) + \varepsilon .
\]

\begin{proposition}\label{prop:eps-coal-bisim}
Let $(X,c)$ and $(Y,d)$ be two $\cD$-coalgebras. The following three properties are equivalent for $R \subseteq X\times Y$:
\begin{itemize}
    \item[1.] $R$ is an $\epsilon$-simulation.
    \item[2.] For every $(x,y) \in R$, there is an $\epsilon$-coupling $\beta\in \cD R$ of $\mu = c(x)$ and $\nu = d(y)$.
    \item[3.] The following "$\epsilon$-lax-bounded" span of morphisms commutes:
        \[
\xymatrix@R=1.5em{
X
\ar[d]_{c}
\ar@{}[dr]|{\sqsupseteq\, \sqsubseteq_\varepsilon}
&
R
\ar[l]_{\pi_1}
\ar[r]^{\pi_2}
\ar[d]_{b}
\ar@{}[dr]|{\sqsubseteq}
&
Y
\ar[d]^{d}
\\
\cD X
&
\cD R
\ar[l]^{\cD\pi_1}
\ar[r]_{\cD\pi_2}
&
\cD Y}
\]
\end{itemize}  
\end{proposition}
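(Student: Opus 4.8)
The plan is to prove the three-way equivalence by establishing a cycle of implications, most naturally $1 \Leftrightarrow 2$ and $2 \Leftrightarrow 3$, where the middle notion of $\epsilon$-coupling serves as the bridge between the relational definition of $\epsilon$-simulation and the diagrammatic span. The diagram in Condition 3 unpacks directly into Condition 2: the existence of a coalgebra structure $b \colon R \to \cD R$ is exactly an assignment $(x,y) \mapsto b(x,y) = \beta \in \cD R$; the right square commuting on the nose (the $\sqsubseteq$ with $\cD\pi_2$) says $\cD\pi_2(b(x,y)) \sqsubseteq d(y)$, i.e.\ the second marginal of $\beta$ is dominated by $\nu = d(y)$, which is Condition 2 of Definition~\ref{def:eps-coupling}; and the left box, carrying the combined $\sqsupseteq$ and $\sqsubseteq_\varepsilon$ annotation, says $\cD\pi_1(b(x,y)) \sqsubseteq c(x)$ together with $c(x) \sqsubseteq_\varepsilon \cD\pi_1(b(x,y))$. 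The first of these is Condition 1 (first marginal dominated by $\mu$), and the second, spelled out via the $\epsilon$-order, is precisely the strengthened Condition 3 from Lemma~\ref{lem:eps-coupl-aux}, namely $\mu(S) \le \sum_{x\in S, y\in Y}\beta(x,y) + \varepsilon$ for all $S$. So I would first observe that $2 \Leftrightarrow 3$ is essentially a translation exercise, invoking Lemma~\ref{lem:eps-coupl-aux} to match the $\sqsubseteq_\varepsilon$ annotation (which quantifies over all subsets) with the single-inequality Condition 3.

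The substantive content lies in $1 \Leftrightarrow 2$. For the direction $2 \Rightarrow 1$, I would assume an $\epsilon$-coupling $\beta$ exists for each related pair and verify the $\epsilon$-simulation inequality $\tau(x)(A) = \mu(A) \le \nu(R(A)) + \epsilon = d(y)(R(A)) + \epsilon$ for every $A \subseteq X$. The key geometric fact is that the support of $\beta$ lies in $R$, so any mass that $\beta$ places on rows indexed by $A$ must sit in columns belonging to $R(A)$; concretely, $\sum_{x \in A, y \in Y}\beta(x,y) = \sum_{x\in A, y \in R(A)}\beta(x,y) \le \sum_{x\in X, y\in R(A)}\beta(x,y) \le \nu(R(A))$ using the second marginal condition. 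Combining this with the consequence of Lemma~\ref{lem:eps-coupl-aux} stated just before the proposition, $\mu(A) \le \sum_{x\in A, y\in Y}\beta(x,y) + \varepsilon$, yields the required bound.

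The harder direction is $1 \Rightarrow 2$: given an $\epsilon$-simulation $R$, I must construct, for each related pair $(x,y)$, an actual $\epsilon$-coupling $\beta$ of $\mu = c(x)$ and $\nu = d(y)$. This is the main obstacle, because it is an existence/construction claim rather than a verification. The $\epsilon$-simulation condition $\mu(A) \le \nu(R(A)) + \epsilon$ is a family of inequalities reminiscent of a defect version of Hall's marriage / max-flow-min-cut condition, and the natural route is a flow-theoretic or Strassen-type argument: build a transportation network with sources weighted by $\mu$, sinks weighted by $\nu$, edges exactly along pairs in $R$, and argue that the $\epsilon$-simulation inequalities guarantee a feasible flow saturating all but at most $\epsilon$ of the source mass. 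The cut condition to be checked is exactly $\mu(A) - \nu(R(A)) \le \epsilon$, which is what an $\epsilon$-simulation gives. One must be slightly careful about the discrete-but-possibly-infinite (countable support) setting, so I would either restrict attention to the countable supports of $\mu$ and $\nu$ and apply a countable max-flow/Strassen argument, or cite the $\epsilon$-coupling existence result of~\cite{TraQapl11} directly, since that paper introduces $\epsilon$-couplings precisely in tandem with approximate simulations and presumably already contains the correspondence. I would aim to state the construction cleanly and defer the flow-existence technicality to Strassen's theorem (or its approximate variant), noting that the support condition on $\beta$ forces the coupling to respect $R$ and thereby closes the loop back to Condition 3.
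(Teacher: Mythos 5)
Your proposal is correct and, on the part of the argument that the paper actually carries out, it follows the same route: the equivalence of Conditions 2 and 3 is treated as a translation exercise in which the plain $\sqsubseteq$ annotations give the two marginal conditions of Definition~\ref{def:eps-coupling} and the $\sqsubseteq_\varepsilon$ annotation is matched to the mass condition via Lemma~\ref{lem:eps-coupl-aux}, exactly as in the paper. The difference lies in the equivalence of 1 and 2: the paper simply cites~\cite{TraQapl11} for it, whereas you prove $2 \Rightarrow 1$ directly (your marginal computation $\mu(A) \le \sum_{x\in A, y\in Y}\beta(x,y) + \varepsilon \le \nu(R(A)) + \varepsilon$, using that $\mathrm{supp}(\beta) \subseteq R$ forces the mass on rows in $A$ into columns in $R(A)$, is correct) and for $1 \Rightarrow 2$ you sketch a Strassen/max-flow construction without completing it, falling back on the same citation the paper uses. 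So nothing is wrong, but be aware that the only genuinely nontrivial step, the existence of an $\epsilon$-coupling from the $\epsilon$-simulation inequalities, is not actually established by your text; if you do not want to rely on~\cite{TraQapl11} you would need to work out the feasibility argument (the cut condition $\mu(A) - \nu(R(A)) \le \varepsilon$ and the countable-support issue) in full, which is more than a routine invocation of Strassen's theorem in the sub-distribution setting. Your self-contained $2 \Rightarrow 1$ direction is a small bonus over the paper's presentation.
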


\begin{proof}
    The equivalence of 1. and 2. has been shown in~\cite{TraQapl11}. Let $R$ be an $\epsilon$-simulation. For $(x,y) \in R$, let $\mu = c(x)$ and $\nu = d(y)$. Define $b\colon R \to \cD R$ by $(x,y) \mapsto \beta$, the epsilon coupling of $\mu$ and $\nu$ that exists.  

    Then, unfolding the definitions, we get:
    \begin{itemize}
        \item $\cD \pi_1 \circ b \sqsubseteq c \circ \pi_1$ is equivalent to: for all $(x,y) \in R$, condition 1. from Definition~\ref{def:eps-coupling} holds.
        \item $\cD \pi_2 \circ b \sqsubseteq d \circ \pi_2$ is equivalent to: for all $(x,y) \in R$, condition 2. from Definition~\ref{def:eps-coupling} holds.
        \item $ c \circ \pi_1\sqsubseteq_\varepsilon \cD \pi_1 \circ b$ is equivalent to: for all $(x,y) \in R$, for all $S \subseteq X$, \[\mu(S) \le \sum_{x\in S,y\in Y} \beta(x,y) + \varepsilon\] which by Lemma~\ref{lem:eps-coupl-aux} is equivalent to condition 3. from Definition~\ref{def:eps-coupling}.
    \end{itemize}
    These facts yield the equivalence of 3. with 2. (and hence 1. as well). 
\end{proof}

Note that Condition 2. in Proposition~\ref{prop:eps-coal-bisim} can be taken as a definition of $\epsilon$-relation-lifting, in analogy to relation lifting for $\cD$ being defined using the existence of a coupling for any pair of elements in $R$. This way was also taken in~\cite{HughesJacobs04} with the definition of lax relation lifting. 

As a consequence of Proposition~\ref{prop:eps-coal-bisim}, we immediately get that a relation $R \subseteq X\times X$ on the states of a $\cD$-coalgebras is an $\epsilon$-bisimulation iff the following diagram commutes: 
\[
\xymatrix@R=1.5em{
X
\ar[d]_{c}
\ar@{}[dr]|{\sqsupseteq\, \sqsubseteq_\varepsilon}
&
R
\ar[l]_{\pi_1}
\ar[r]^{\pi_2}
\ar[d]_{b}
\ar@{}[dr]|{_\varepsilon\sqsupseteq\, \sqsubseteq}
&
X
\ar[d]^{d}
\\
\cD X
&
\cD R
\ar[l]^{\cD\pi_1}
\ar[r]_{\cD\pi_2}
&
\cD X}
\]

As already mentioned, generalizing the notions of approximate (bi)simulations to generic coalgebras is an interesting direction that we will undertake in the near future. A similar notion is being developed for cost automata~\cite{PedroPC}.

\section{Concluding remarks}\label{sec:conc}

We have shown that $\epsilon$-bisimulations are closely connected with the Lévy-Prokhorov pseudometric on (sub)probability distributions: The LP pseudometric lifts $\cD$ to a functor on pseudometric spaces and induces coinductive behaviour distance on LMPs,  which is the greatest fixpoint of a suitable functional and turns out to be exactly the distance induced by $\epsilon$-bisimilarity. Remarkably, any fixpoint distance of that functional defines an $\epsilon$-bisimulation.
This is the first time that a distance on distributions other than the Kantorovich distance is used for characterizing behavioural distances. 

 \section{Acknowledgement}
 This work was partly done during a sabbatical of Jos\'ee Desharnais at the University of Salzburg, as well as during Dagstuhl Seminar 24432 and the Bellairs Workshop on Quantitative Reasoning 2025. We thank these venues, the organizers, and the participants for providing a perfect working environment. 
 We also thank Matteo Mio and Franck van Breugel for some fruitful and motivating discussions.
\newpage

\bibliography{main.bib}

\end{document}